\newcommand{\eps}{\varepsilon}
\newcommand{\abs}[1]{\left| #1 \right|}
\newcommand{\vabs}[1]{\left\| #1 \right\|}
\newcommand{\pbra}[1]{\left( #1 \right)}
\newcommand{\sbra}[1]{\left[ #1 \right]}
\newcommand{\cbra}[1]{\left\{ #1 \right\}}
\renewcommand{\mid}{\,\middle\vert\,}
\newcommand{\poly}{\mathrm{poly}}
\newcommand{\polylog}{\mathrm{polylog}}
\newcommand{\Fbb}{\mathbb{F}}
\newcommand{\Nbb}{\mathbb{N}}
\newcommand{\Rbb}{\mathbb{R}}
\newcommand{\Acal}{\mathcal{A}}
\newcommand{\Bcal}{\mathcal{B}}
\newcommand{\Fcal}{\mathcal{F}}
\newcommand{\Gcal}{\mathcal{G}}
\newcommand{\Pcal}{\mathcal{P}}
\newcommand{\Scal}{\mathcal{S}}
\newcommand{\Hcal}{\mathcal{H}}
\newcommand{\ext}{\mathsf{Ext}}
\newcommand{\E}{\mathbb{E}}
\newcommand{\Tr}{\mathrm{Tr}}
\newtheorem{theorem}{Theorem}[section]
\newtheorem{lemma}[theorem]{Lemma}
\newtheorem{corollary}[theorem]{Corollary}
\newtheorem{claim}[theorem]{Claim}
\newtheorem{fact}[theorem]{Fact}
\newtheorem{definition}[theorem]{Definition}
\title{Eliminating Intermediate Measurements \\ using Pseudorandom Generators}
\date{}
\author{Uma Girish\thanks{Department of Computer Science, Princeton University. Research supported by the Simons Collaboration on Algorithms and Geometry, by a Simons Investigator Award and by the National Science Foundation grants No. CCF-1714779, CCF-2007462. \href{ugirish@cs.princeton.edu}{\texttt{ugirish@cs.princeton.edu}}} 
	\and Ran Raz\thanks{Department of Computer Science, Princeton University. Research supported by the Simons Collaboration on Algorithms and Geometry, by a Simons Investigator Award and by the National Science Foundation grants No. CCF-1714779, CCF-2007462. \href{ranr@cs.princeton.edu}{\texttt{ranr@cs.princeton.edu}}} }
\begin{document} 
\maketitle

\begin{abstract}
	We show that quantum algorithms of time $T$ and space $S\ge \log T$ with unitary operations and intermediate measurements can be simulated by quantum algorithms of time $T \cdot \poly (S)$ and space $ {O}(S\cdot \log T)$ with unitary operations and {\it without intermediate measurements}. The best results prior to this work required either $\Omega(T)$ space (by the deferred measurement principle) or $\poly(2^S)$ time~\cite{fr,grz}. Our result is thus a time-efficient and space-efficient simulation of algorithms with unitary operations and intermediate measurements by algorithms with unitary operations and \emph{without intermediate measurements}. 
	
	To prove our result, we study pseudorandom generators for quantum space-bounded  algorithms.
	We show that (an instance of) the INW pseudorandom generator for classical space-bounded algorithms~\cite{inw} also fools quantum space-bounded algorithms.
	More precisely, we show that for quantum space-bounded algorithms that have access to a read-once tape consisting of random bits, the final state of the algorithm when the random bits are drawn from the uniform distribution is nearly identical to the final state when the random bits are drawn using the INW pseudorandom generator. This result applies to general quantum algorithms which can apply unitary operations, perform intermediate measurements and reset qubits.
\end{abstract}

\section{Introduction}

\subsection{Eliminating Intermediate Measurements}

The main motivation for this work is the following fundamental question: What is the relative power of quantum algorithms {\bf with intermediate measurements} and quantum algorithms {\bf without intermediate measurements}?

The textbook's answer to this question is given by the {\it Principle of Deferred Measurements}. The principle states that delaying measurements until the end of a computation doesn't affect the output, as long as the qubits that were supposed to be measured do not further participate in the computation from that point on.
This gives a very simple method to eliminate all intermediate measurements in  quantum computations.
However, this simple method comes with a huge price in terms of the space needed to perform the computation, as qubits that were supposed to be measured cannot further participate in the computation.

More precisely, for any quantum algorithm of time $T$ and space $S\ge \log T$ with intermediate measurements, the principle of deferred measurements implies that it can be simulated by a quantum algorithm of time $T$ and space $S+T$ without intermediate measurements. Here, the overhead in space is potentially exponential. It was recently shown that there is also a simulation by algorithms of space $O(S)$ and time $\poly(T,2^S)$ without intermediate measurements~\cite{grz,fr}\footnote{Such a simulation was given in~\cite{grz} for algorithms that can use only quantum registers and independently in~\cite{fr} for the more general case of algorithms that may also use classical registers.}.  While the space overhead in these simulations is optimal, the time overhead is potentially exponential.


In this work, we show a simulation where the space dependence is ${O}(S\cdot\log T)$ and the time dependence is $T\cdot \poly(S)$. Our result applies to algorithms with unitary operations and intermediate measurements and simulates them by algorithms with unitary operations and \emph{without intermediate measurements}. Our result is thus a time-efficient and space-efficient simulation of algorithms with unitary operations and intermediate measurements by algorithms with unitary operations and without intermediate measurements. For example, our result implies that unitary quantum algorithms of polylogarithmic space and polynomial time are no less powerful than ones that can additionally perform intermediate measurements.

\begin{theorem}[Informal] \label{Thm:Informal1}
	Every quantum algorithm of time $T$ and space $S\ge \log T$ with unitary operations and intermediate measurements 
	can be simulated by a quantum algorithm of time $T\cdot S^2\cdot  \polylog (S)$ and space $ {O}(S \cdot \log T)$ with only unitary operations and no intermediate measurements.
\end{theorem}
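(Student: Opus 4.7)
The plan is to reduce the theorem to the paper's main pseudorandomness result (stated in the abstract), namely that the INW pseudorandom generator fools quantum space-bounded algorithms with read-once access to random bits. Given this, the simulation proceeds in three stages.

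First, I would convert the given algorithm $A$ into a unitary quantum algorithm $A'$ with access to a read-once classical random tape of length $O(T)$, whose output distribution matches that of $A$. The key observation is that an intermediate measurement in the computational basis, whose outcome is only used for subsequent classical control, is equivalent as a quantum channel to sampling a random bit $r$ and applying $Z^r$ to the would-be-measured qubit: once the qubit has been ``dephased'' in this way it behaves as a classical bit, so any operation conditioned on the outcome can be implemented as a controlled unitary with the qubit itself as the control. Arbitrary-basis measurements and qubit resets are handled analogously using additional random bits. The resulting $A'$ has time $O(T)$, space $O(S)$, and consumes one random bit per intermediate measurement of $A$.

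Next, I would invoke the paper's pseudorandomness result on $A'$: replacing the uniform random tape by the output of the INW generator with a seed of length $s = O(S\log T)$ produces a final mixed state close in trace distance to that of $A'$ on a uniform tape. To make the simulation purely unitary, I would then quantize the seed: prepare a seed register of $s$ qubits in the uniform superposition $\frac{1}{\sqrt{2^s}}\sum_\sigma |\sigma\rangle$, and whenever $A'$ would read the $i$-th random bit, coherently compute the $i$-th bit of the INW output from the seed into an ancilla, apply the corresponding controlled operation on the workspace, and uncompute the ancilla. This is a purely unitary procedure, and measuring only the workspace output is equivalent to first tracing out the seed and then measuring, so by the pseudorandomness guarantee the output distribution matches that of $A$ up to negligible error. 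The total space is the seed register plus the $O(S)$-qubit workspace plus $O(S\log T)$ scratch for PRG evaluation, giving $O(S\log T)$; each bit of the INW output can be computed by $S^2\,\polylog(S)$ reversible operations via a standard implementation of the expander-walk recursion, so the total time is $T\cdot S^2\,\polylog(S)$.

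The main obstacle is the pseudorandomness result itself, which is the heart of the paper. Unlike a classical space-bounded machine, a quantum algorithm can maintain coherent superpositions across reads of the tape, so the standard INW analysis (which reasons about marginal distributions over intermediate classical configurations) does not apply directly. I would expect the proof to follow the recursive expander-walk structure of the INW generator, but at each level one must bound the trace-distance perturbation induced by correlating the two halves of the tape through an expander walk, using the low effective dimension implied by the algorithm's $S$-qubit workspace to argue that an expander with an appropriate eigenvalue gap suffices.
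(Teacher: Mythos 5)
Your proposal follows essentially the same route as the paper's. The measurement-to-randomness reduction you describe (replacing an intermediate computational-basis measurement by a uniformly random $Z^r$ dephasing) is exactly \cref{randomcoinsmeasurement}; the pseudorandomness result you invoke is \cref{maintheorem1}; and the coherent seed-in-superposition construction with a compute--apply--uncompute cycle for each PRG bit is exactly the simulation in the proof of \cref{maintheorem2}.

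One imprecision worth flagging in your time analysis: you assert that \emph{each} bit of the INW output can be produced by $S^2\polylog(S)$ reversible operations, but the cost of the straightforward root-to-leaf evaluation is actually $S^2(\log T)^2\polylog(S)$ per bit (\cref{spacetimeclaim}): the recursion tree has depth $M=\Theta(\log T)$, and the extractor evaluation at height $i$ already costs $\Theta(iN^2\polylog N)$ with $N=\Theta(S)$. Multiplying naively by $T$ reads gives a simulation of time $T\cdot S^2(\log T)^2\polylog(S)$, which is the weaker bound the paper establishes first. To reach $T\cdot S^2\polylog(S)$ the paper replaces per-bit recomputation from the root by a depth-first traversal of the INW tree, maintaining only the label of the current node and moving to a neighboring node at each step (\cref{refinedclaim}); a node at height $i$ costs $O(iN^2\polylog N)$ but there are only $O(T/2^i)$ such nodes, so the sum over heights is $O(T\cdot S^2\polylog S)$. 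This DFS amortization is the one ingredient your sketch omits; without it the per-bit cost you claim is not justified, and the stated time bound would pick up an extra $(\log T)^2$ factor.
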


 For quantum algorithms that can apply unitary operations, intermediate measurements \emph{and reset qubits}\footnote{The reset operation maps a qubit in an arbitrary state to the $\ket{0}$ state. If the algorithm has the ability to perform measurements, then the reset operation can be simulated with additional classical memory (on which classical operations are allowed). More precisely, to reset a qubit, we can measure it, swap the contents of this qubit with a classical bit in the $\ket{0}$ state by using two controlled-not operators and finally erase the contents of the classical register. Conversely, classical memory can be simulated using the reset operator. Thus, quantum algorithms with intermediate measurements and the ability to reset qubits correspond to quantum algorithms with both quantum and classical memory.}, we can trivially eliminate intermediate measurements using one ancilla qubit and with no overhead in time\footnote{We thank anonymous reviewers for this observation.}. This is because the measurement of a qubit can be simulated by copying the qubit to an ancilla qubit using a controlled-not gate and then resetting the ancilla qubit. 

\subsection{Pseudorandom Generators for Quantum Space-Bounded Computations}

It is well known that quantum measurements can be used to generate perfect random bits. The converse is also true and is fascinating in its own right:
An intermediate  measurement of a qubit can be implemented by unitary operations together with one random bit, as follows: with probability $1/2$ apply the identity matrix and with probability $1/2$ apply a reflection over $|1\rangle$ (in that qubit).
Thus, intermediate measurements are, in some sense, equivalent to randomness and in particular could be simulated by random bits.
From that perspective, it is very natural to try to derandomize and use pseudorandom bits, that is, to simulate intermediate measurements by pseudorandom bits.
In particular, we will use in this work pseudorandom generators for space-bounded computation, that are particularly suitable for our purpose.

Our main result is proved by studying (an instance of) the INW pseudorandom generator~\cite{inw} in the setting of quantum space-bounded algorithms. Let $S,T:\Nbb\to\Nbb$ be computable functions such that $S\ge \log T$. The INW pseudorandom generator $G$ for classical randomized algorithms of space $S$ and time $T$ is a function that takes inputs in $\{0,1\}^{N(M+1)}$ where $M=\Theta(\log T)$ and $N=\Theta( S  )$ (we refer to the input as the {\it seed} to $G$) and outputs a string in $\{0,1\}^T$, furthermore, this function is computable in space $O(S\cdot \log T)$ and time $T\cdot S^2\cdot \polylog (S)$. For any classical randomized algorithm of space $S$ and time $T$, the output of the algorithm when the random bits are drawn from the uniform distribution is nearly indistinguishable from the case when the random bits are drawn from the output of $G$ on a uniformly random seed~\cite{inw}. In this work, we show that a similar result holds for quantum algorithms of space $S$ and time $T$.

\begin{theorem}[Informal] ~\label{Thm:Informal2}
	Consider any quantum algorithm of space $S$ and time $T$ with arbitrary quantum operators that has access to a read-once tape consisting of random bits.  Then, the final state of the quantum algorithm when the random bits are drawn from the uniform distribution is nearly indistinguishable from the final state when the random bits are drawn from the output of $G$ on a uniformly random seed.
\end{theorem}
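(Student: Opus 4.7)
The plan is to view the quantum algorithm as a composition $\mathcal{B}_{2^L}\circ\dots\circ\mathcal{B}_1$ of channels on the work register (of dimension at most $2^S$), where $L=\Theta(\log T)$ and each $\mathcal{B}_i(r_i)$ is parameterized by the $i$-th block of bits read from the read-once tape. Writing the final state as $\rho(\vec r)=\mathcal{B}_{2^L}(r_{2^L})\circ\dots\circ\mathcal{B}_1(r_1)(\rho_0)$, the goal is to bound $\|\E_{\vec r\sim U_T}\rho(\vec r)-\E_{s}\rho(G(s))\|_1$ for the INW generator $G=G_L$ defined recursively by $G_\ell(s,h)=G_{\ell-1}(s)\Vert G_{\ell-1}(h(s))$, where $h$ is drawn from an expander-neighbor hash family of spectral parameter $\lambda$.

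I would proceed by induction on $\ell$: for any length-$2^\ell$ sub-algorithm and any starting density matrix, the output under $G_\ell$-bits is $\eps_\ell$-close in trace norm to the output under uniform bits. The inductive step splits the sub-algorithm as $\mathcal{A}_2\circ\mathcal{A}_1$ and passes through three hybrids between ``seed $=(G_{\ell-1}(s_1),G_{\ell-1}(h(s_1)))$'' and ``seed $=(r_1,r_2)$ uniform'': (a) replace the correlated $h(s_1)$ by a fresh independent seed $s_2$ (the hash step, discussed next); (b) replace $G_{\ell-1}(s_1)$ by a uniform $r_1$ in $\mathcal{A}_1$'s argument, applying the inductive hypothesis to $\mathcal{A}_1$ and noting that the subsequent channel $\E_{s_2}\mathcal{A}_2(G_{\ell-1}(s_2))$ is trace non-increasing; (c) replace $G_{\ell-1}(s_2)$ by a uniform $r_2$ in $\mathcal{A}_2$'s argument, applying the inductive hypothesis to $\mathcal{A}_2$ on each fixed state produced by $\mathcal{A}_1$. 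Each of (b),(c) costs $\eps_{\ell-1}$, yielding $\eps_\ell\le 2\eps_{\ell-1}+\delta$ where $\delta$ is the cost of the hash step.

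The main obstacle is the hash step: for any family of density matrices $\sigma_x$ and channels $\mathcal{C}(y)$ on a $2^S$-dimensional space, I must bound $D:=\E_{x,h}\mathcal{C}(h(x))(\sigma_x)-\E_{x,y}\mathcal{C}(y)(\sigma_x)$ in trace norm. The plan is to dualize via $\|D\|_1=\max_{\|M\|_\infty\le 1}|\Tr(MD)|$, rewrite $\Tr(M\cdot\mathcal{C}(y)(\sigma_x))=\langle\mathcal{C}^*(y)(M),\sigma_x\rangle_{\mathrm{HS}}$, expand $\phi(y):=\mathcal{C}^*(y)(M)$ and $\psi(x):=\sigma_x$ in an orthonormal Hilbert--Schmidt basis of Hermitian matrices on the work space, apply the classical expander mixing lemma to each scalar coordinate, and combine via Cauchy--Schwarz to obtain
\[
|\Tr(MD)|\le\lambda\cdot\sqrt{\E_x\|\sigma_x\|_F^2}\cdot\sqrt{\E_y\|\mathcal{C}^*(y)(M)\|_F^2}\le\lambda\cdot 2^{S/2},
\]
using $\|\sigma_x\|_F\le 1$ for density matrices and $\|\mathcal{C}^*(y)(M)\|_F\le\sqrt{2^S}\,\|M\|_\infty$ since $\mathcal{C}^*$ is a unital completely positive map. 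Hence $\delta=O(\lambda\cdot 2^{S/2})$.

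Unrolling $\eps_\ell\le 2\eps_{\ell-1}+\delta$ gives $\eps_L=O(T\delta)$, so to achieve any target error $\eps$ it suffices to take $\lambda=\eps\cdot 2^{-S/2}/T$, which costs a hash seed of length $d=O(S+\log(T/\eps))$ per level. The total seed length $n+L\cdot d$ is $O(S\log T)$ since $L=\Theta(\log T)$ and $S\ge\log T$, matching the announced parameters. The essential quantum content is the $2^{S/2}$ dimensional overhead in the hash step, which we absorb into $\lambda$ at only an additive $O(S)$ cost in the hash seed, so the quantum result inherits the classical seed length up to constants.
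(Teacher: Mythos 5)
Your proposal is correct, but it takes a genuinely different route from the paper. The paper instantiates INW in its ``growing-seed'' extractor form: the $i$-th level applies a $\delta$-biased XOR extractor $\ext^{(i)}_{s_i}$ (from Fehr--Schaffner) to the entire $iN$-bit seed of the previous level, and the quantum-resilience of the third hybrid is obtained by citing the Fehr--Schaffner theorem after verifying the min-entropy hypothesis via the chain rule $H_{\min}(\rho_{XS}\mid S)\ge H_{\min}(\rho_{XS})-S\ge iN-S$ together with $H_{\min}\le H_2$. You instead use the ``fixed-seed'' expander-walk form of INW and prove the key quantum-resilience step from scratch: dualize the trace norm, expand in a Hermitian Hilbert--Schmidt basis, apply the classical expander mixing lemma coordinate-wise, and recombine by Cauchy--Schwarz, with the dimensional penalty $2^{S/2}$ coming from $\|\Phi(M)\|_F\le 2^{S/2}\|M\|_\infty$ for the unital CP adjoint of a channel. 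This is a valid and in fact rather transparent replacement for the paper's min-entropy machinery, and it correctly reproduces the same $2^{S/2}$ loss that appears in the paper's application of the extractor at entropy deficit $S$; the hybrid decomposition, the recurrence $\eps_\ell\le 2\eps_{\ell-1}+\delta$, and the resulting $O(S\log T)$ seed length also match. One thing your variant gives up, though irrelevant to the statement of Theorem~1.2 itself: the paper deliberately chooses the XOR extractor because $\ext_y$ is a bijection for every fixed seed $y$, which is the property exploited later (Claim~3.1, Theorem~1.1) to compute the generator \emph{unitarily} in-place; a generic expander-neighbor hash $h$ need not be a permutation of $\{0,1\}^n$, so carrying your version through to the simulation theorem would require restricting to permutation expanders or otherwise reworking the reversible implementation.
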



This result applies to quantum algorithms that can apply unitary operators, perform measurements or reset qubits. See \cref{maintheorem1} for a more formal statement. The INW pseudorandom generator is defined recursively using repetitive application of a randomness extractor. For the proof of Theorem~\ref{Thm:Informal2}, we use an instance of the INW generator, using a randomness extractor that was proved by Fehr and Schaffner to be resilient to quantum side information~\cite{fs}. Our proof that the generator fools quantum algorithms is a modification of the proof in the classical case, relying on tools and techniques from quantum information theory, in particular tools and techniques from~\cite{rennerthesis,fs}.

\subsection{Discussion and Additional Motivation}

Pseudorandom generators for classical space-bounded computation have been studied in numerous works (see for example~\cite{bns,nis,inw,nz,rr,rsv,brry,gr,mrt}). To the best of our knowledge,
pseudorandom generators for quantum space-bounded computation have not been studied before, possibly because, as mentioned above, quantum measurements can presumably generate perfect random bits, so the standard motivation of derandomizing randomized computations does not apply to the quantum case. Nevertheless, we believe that  the connection to eliminating intermediate measurements gives a strong motivation for studying pseudorandom generators for quantum space-bounded computation.

While eliminating intermediate measurements is our main motivation, we believe that pseudorandom generators for quantum space-bounded computation are interesting in their own right for various other reasons. First, any such generator implies in particular an indistinguishability result that gives a lower bound for the resources needed to achieve a certain computational task, which seems interesting from the point of view of complexity theory. Second, while in principle quantum computers may use measurements to generate perfect random bits, in reality this is hardly the case as quantum computers are likely to remain unreliable in the near future. In the last two decades, this motivated a large body of work on problems such as device-independent quantum cryptography, 
verifiable quantum computation
and certified randomness generation, all of which assume a setting where the quantum part of the device is unreliable and cannot be trusted. It's possible that pseudorandom generators for quantum space-bounded computations may find applications in these areas.
Finally, we find the question of how much true randomness is needed to simulate a quantum system fascinating.

Let us also mention that eliminating intermediate measurements is also interesting from the point of view of time-reversibility of computation.
Landauer introduced the concept of time-reversible computation and argued that any irreversible operation must be accompanied by entropy increase~\cite{Landauer} (see also~\cite{bennett}).
An interesting aspect of Theorem~\ref{Thm:Informal2} is that it shows that any quantum algorithm can be implemented using only time-reversible operations (except for the final measurement that gives the final output), with small overhead in time and space.

\subsection{Proof Overview}

We describe the proof of~\Cref{Thm:Informal2} in the classical case as in~\cite{inw}. This exploits the limited amount of information that is passed between successive states of the memory. Consider an algorithm $B$ of space $S$ and time $2\cdot T$ and assume that it uses a random bit at each time step. Let $B_0$ be the first half of the algorithm and $B_1$ be the second half. The algorithm $B_0$ uses a uniformly random string $U\sim\{0,1\}^T$ and $B_1$ uses a uniformly random string $U'\sim\{0,1\}^T$ that is independent of $U$. The only interaction between $B_0$ and $B_1$ is through the memory at time $T$. Since the memory is of at most $S$ bits, intuitively, the amount of information passed from $B_0$ to $B_1$ is at most $S$ bits. One may hope to replace the $T$ truly random bits used by $B_1$ with $T$ pseudorandom bits that essentially contain only $ S$ truly random bits. The idea is to apply an extractor to the random string used by $B_0$ and a uniformly random seed of length $\Theta(S)$. That is, for a suitable extractor $\ext:\{0,1\}^T\times \{0,1\}^{d}\to \{0,1\}^T$, we run the algorithm $B$ on the random string $(U,\ext(U,D))$ as opposed to $(U,U')$, where $D\sim \{0,1\}^{d}$ is uniformly random and $d=\Theta(S)$. This would effectively reduce the amount of randomness  from $2T$ bits to $T+\Theta(S)$ bits. The INW generator builds on this idea and recurses for $\Theta(\log T)$ steps, producing a pseudorandom generator of seedlength $\Theta(S\cdot \log T)$. We now justify the application of an extractor. Since there are at most $2^S$ possible memory states of the algorithm $B_0$, for most states $\mathcal{C}$ reached by $B_0$ at time $T$, the uniform distribution $U_{\mathcal{C}}$ on all strings which make $B_0$ reach the state $\mathcal{C}$ at time $T$ has min-entropy at least $T-\Theta(S)$. Suppose the extractor $\ext$ works against min-entropy at least $T-\Theta(S)$. (Such extractors are known and well-studied.) Then, the distribution of $\ext(U_\mathcal{C},D)$ would be close to the uniform distribution over $\{0,1\}^{T}$. In particular, the final state of the algorithm $B_1$ when starting at the state $\mathcal{C}$ would be nearly identical whether run according to $U'$ or according to $\ext(U_{\mathcal{C}},D)$. Since this holds for most states $\mathcal{C}$ reached by $B_0$ at time $T$, the distribution of final state of $B$ is nearly identical, whether we use the random string $(U,U')$ or the pseudorandom string $(U,\ext(U,D))$. 

To extend this idea to quantum algorithms, we have to deal with memory that is an arbitrary quantum state. In particular, we cannot ``condition'' on the memory at a particular time step. We instead use extractors that are \emph{resilient to quantum side information}. These are extractors with the following property: Suppose for each string $u\in\{0,1\}^{T }$, we have some quantum state $\rho_u$ on $S$ qubits (this represents the memory of the first half of the algorithm when run on the string $u$), then the distribution of $(\ext(U),\rho_U)$ is close to the tensor product of the fully mixed state over $\{0,1\}^{T }$ and the state $\E_{u\sim \{0,1\}^{T }}[\rho_u]$. Such extractors have been studied and exhibited with seedlength $\Theta(S)$~\cite{fs}. We use these extractors and modify the proof from the classical case
to derive our result, relying on tools and techniques from quantum information theory, in particular tools and techniques from~\cite{rennerthesis,fs}.

To prove~\Cref{Thm:Informal1}, we make use of the aforementioned equivalence between intermediate measurements and random bits. Given a quantum algorithm with $T$ intermediate measurements, we consider the equivalent quantum algorithm with $T$ random bits. Consider the algorithm that generates a uniformly random string on $\Theta(S\cdot \log T)$ bits, computes the output of the INW generator on this random string and simulates the above algorithm on the output of this generator as opposed to $T$ uniformly random bits. \Cref{Thm:Informal2} implies that this step introduces negligible error. Note that the $\Theta(S\cdot \log T)$ random bits used by the algorithm can be simulated \emph{unitarily} using additional $\Theta(S\cdot \log T)$ space, by the principle of deferred measurements. The rest of the technical work is devoted to the analysis of the space and time complexity of the INW generator with regards to unitary quantum algorithms. For this, we make use of the property of the extractor in~\cite{fs} that for every fixed seed, the extractor is a bijection. This is useful with regards to unitary computation. We show that each step of the recursion tree involved in computing the INW generator can be executed reversibly and efficiently by unitary quantum algorithms of small space. This completes the proof of our simulation result.

\subsection{Organization}

In \Cref{sec:preliminaries}, we formally define the various models of quantum computation with bounded space and time. In \Cref{sec:inw}, we define the INW pseudorandom generator and study its time and space complexity with respect to unitary quantum algorithms. We state our main theorem in \cref{sec:main}. We prove Theorem 1.2 (\cref{maintheorem1}) and  Theorem 1.1 (\cref{maintheorem2}) in \cref{sec:maintheorem1}  and \cref{sec:maintheorem2} respectively.

\section{Notation \& Preliminaries}
\label{sec:preliminaries}

For a mathematical statement $\mathcal{P}$, we use $\mathbbm{1}_{\mathcal{P}}\in\{0,1\}$ to refer to a boolean value which is 1 if $\mathcal{P}$ is true and 0 if $\mathcal{P}$ is false.

\subsection{Probability Distributions}
Let $\Sigma$ be an alphabet and $D$ be a probability distribution over $\Sigma$. We use $x\sim D$ to denote $x$ sampled according to $D$. For a subset $S\subseteq \Sigma$, we use $x\sim S$ to denote $x$ sampled according to the uniform distribution on $S$. For a multiset $S$ of $\Sigma$, we use $x\sim S$ to denote $x$ sampled with probability proportional to the number of times it occurs in $S$. Let $N\in \Nbb$. We use $U_N$ to denote the uniform distribution on $\{0,1\}^N$. For a function $G:\Sigma\to \Rbb^N$, we use $G(D):=\E_{x\sim D}[G(x)]$ to denote the expected output of $G$ when the inputs are drawn according to $D$.

\subsection{Quantum States}

Let $\mathcal{H}_m$ be a Hilbert space of dimension $2^m$. This is a vector space defined by the $\mathbb{C}$-span of the orthonormal basis $\{\ket{x}:x\in\{0,1\}^m\}$, that is, every element in this space is a unique complex combination of the vectors $\ket{x}$, where $x$ is a bit string in $\{0,1\}^m$. We use $\ket{0^m}$ to denote the state $\ket{(0,\ldots,0}$ on $m$ qubits. We omit the subscript $m$ when it is implicit. The complex conjugate of the vector $\ket{x}$ is denoted by $\bra{x}$. Let $\mathcal{P}(\mathcal{H}_m)$ be the set of all non-negative operators on $\mathcal{H}_m$, that is positive semidefinite matrices in $\mathbb{C}^{2^m\times 2^m}$. Let $\mathcal{S}(\mathcal{H}_m)$ be set of density operators on $\mathcal{H}_m$, that is, matrices in $\mathcal{P}(\mathcal{H}_m)$ with trace 1. We typically use $\rho$ to refer to elements of $\mathcal{S}(\mathcal{H}_m)$. Every element of $\mathcal{P}(\mathcal{H}_m)$ can be expressed uniquely as a complex combination of $\ket{i}\bra{j}$ where $i,j\in\{0,1\}^m$. We denote the identity matrix on $2^m\times 2^m$ by $\mathbb{I}_m$, and we omit the subscript if it is implicit. The state of a quantum system $M$ on $m$ qubits is described by a density operator $\rho_M\in \mathcal{S}(\mathcal{H}_m)$. The completely mixed state on $m$ qubits is described by $\frac{\mathbb{I}_m}{2^m}$. A classical state is a diagonal density operator in $\mathcal{S}(\mathcal{H}_m)$. 

Let $X$ be a system of $n$ bits and $S$ be a system of $s$ qubits. A classical-quantum state $\rho_{XS}$ is a state of the form $ \sum_{x\in \{0,1\}^n} \ket{x}\bra{x} \rho_x$ where $\rho_x\in \Pcal(\mathcal{H}_s)$ and $\sum_{x\in \{0,1\}^n} \Tr( \rho_x)=1$. We say that $\rho_{XS}$ is classical on $X$. We use $\rho_X=\ket{x}\bra{x}\Tr(\rho_x)$ to denote the induced classical state on the qubits in $X$. Similarly, we use $\rho_S=\sum_{x\in \{0,1\}^n} \rho_x$ to denote the induced state on the qubits in $S$. In this paper, it is often the case that the induced state on $X$ corresponds to the uniform distribution over $\{0,1\}^n$. In this case, we use $\E_{x\sim\{0,1\}^n} \sbra{\ket{x}\bra{x}\rho_x}$ to denote the state $\rho_{XS}$ where $\rho_x\in \Scal(\Hcal_s)$ for all $x\in \{0,1\}^n$. We say that $\rho_{XS}$ is uniform on $X$.

\subsection{Quantum State Evolution}

The evolution of a quantum state is described by a linear transformation $E:\mathcal{S}(\mathcal{H})\rightarrow \mathcal{S}(\mathcal{H})$ which is CPTP (that is, completely positive and trace preserving). In our work, we focus on transformations between vector spaces of the same dimension. We focus on the following quantum operations. 
\begin{itemize}
	\item Unitary Operators: An arbitrary unitary map $U:\Hcal\rightarrow \Hcal$ defines a CPTP map which maps $\rho\in\Scal(\Hcal)$ to $U\rho U^\dagger$. We make use of the following two unitary matrices, these are universal for unitary quantum computation~\cite{shi}.
	\begin{itemize}
	\item Hadamard:  $H:\Hcal_1\to\Hcal_1,H=\frac{1}{\sqrt{2}} \begin{bmatrix} 1 & 1 \\ 1 & -1 \end{bmatrix}$.
	\item Toffoli: $U:\Hcal_3\rightarrow \Hcal_3$  maps basis states $\ket{i,j,k}$ to $\ket{i,j,k\oplus i\cdot j}$ for $i,j,k\in \{0,1\}$. 
	\end{itemize}
	These operations naturally extend to operations on a larger Hilbert space by acting on a subset of qubits.
	\item The measurement operator $M$ on the first qubit maps the state $\rho=\sum_{i,j\in\{0,1\}}  \ket{i}\bra{j} \rho_{i,j}$ to the state $\sum_{i\in\{0,1\}}  \ket{i}\bra{i}\rho_{i,i}$ for all $\rho\in\Scal(\Hcal)$.
	\item The reset operator $R$ on the first qubit maps the state $\rho=\sum_{i,j\in\{0,1\}}  \ket{i}\bra{j} \rho_{i,j}$ to the state $ \ket{0}\bra{0} \pbra{\sum_{i\in\{0,1\}}\rho_{i,i} }$ for all $\rho\in\Scal(\Hcal)$. 
\end{itemize}

\subsection{Distance between States}

For any matrix $M$, we denote by $\|M\|_1$ its trace norm, that is $\|M\|_1:=\mathrm{Tr}(\sqrt{MM^\dagger}).$ Let $\rho,\sigma\in\mathcal{S}(\mathcal{H})$ be density operators. We define the trace distance between $\rho$ and $\sigma$ to be $d_1(\rho,\sigma):=\frac{\|\rho-\sigma \|_1}{2}.$
We will use the following standard facts about the trace distance. Firstly, the trace distance satisfies triangle inequality. Secondly, the trace distance between $\rho$ and $\sigma$ is equal to the maximum probability with which these states can be distinguished using a projective measurement $E,\mathbb{I}-E$ onto two subspaces. Thirdly, quantum operations cannot increase the trace distance. More formally,
\begin{fact}     For all $\rho_1,\rho_2,\rho_3\in \Scal(\Hcal)$,	\label{tracedistancetriangle} 
	$d_1(\rho_1,\rho_3)\le d_1(\rho_1,\rho_2)+d_1(\rho_2,\rho_3).$
\end{fact}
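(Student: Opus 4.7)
The plan is to reduce this to the triangle inequality for the trace norm itself. Unpacking the definition $d_1(\rho,\sigma)=\tfrac{1}{2}\|\rho-\sigma\|_1$, the desired inequality becomes $\|\rho_1-\rho_3\|_1 \le \|\rho_1-\rho_2\|_1 + \|\rho_2-\rho_3\|_1$, and writing $\rho_1-\rho_3 = (\rho_1-\rho_2) + (\rho_2-\rho_3)$ makes this an instance of $\|A+B\|_1 \le \|A\|_1 + \|B\|_1$, i.e.\ subadditivity of the trace norm.

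For subadditivity, I would invoke the variational characterization $\|A\|_1 = \sup_U |\Tr(AU)|$, where the supremum ranges over unitaries $U$ on $\Hcal$; this identity follows from the singular value decomposition of $A$ by choosing $U$ to align the left and right singular vectors. Given this, for any matrices $A,B$ and any unitary $U$, the scalar triangle inequality yields $|\Tr((A+B)U)| \le |\Tr(AU)| + |\Tr(BU)| \le \|A\|_1 + \|B\|_1$, and taking the supremum over $U$ on the left proves $\|A+B\|_1 \le \|A\|_1 + \|B\|_1$.

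A more operationally flavored alternative uses the characterization recorded immediately after the fact in the paper, namely that $d_1(\rho,\sigma)$ equals the largest $|\Tr(E(\rho-\sigma))|$ over projections $E$ onto subspaces of $\Hcal$. Taking any such $E$, one inserts $\pm \Tr(E\rho_2)$ inside the absolute value to get $|\Tr(E(\rho_1 - \rho_3))| \le |\Tr(E(\rho_1-\rho_2))| + |\Tr(E(\rho_2-\rho_3))| \le d_1(\rho_1,\rho_2)+d_1(\rho_2,\rho_3)$, and then maximizes over $E$ on the left. There is no real obstacle here; the only nontrivial ingredient is the variational (or operational) characterization of the trace norm, both of which are standard.
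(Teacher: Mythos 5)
The paper states this fact without proof; it is listed alongside \cref{tracedistancedefinition} and \cref{tracedistanceproperty} as one of the ``standard facts about the trace distance,'' so there is no in-paper argument to compare against. Your proof is correct. The reduction $d_1(\rho_1,\rho_3)=\tfrac12\|(\rho_1-\rho_2)+(\rho_2-\rho_3)\|_1$ followed by subadditivity of $\|\cdot\|_1$ is exactly the standard route, and your derivation of subadditivity from the variational identity $\|A\|_1=\sup_U|\Tr(AU)|$ (over unitaries, via SVD) is a complete and correct proof that $\|\cdot\|_1$ is a norm. Your second, ``operational'' argument also works: with $E$ ranging over $0\preceq E\preceq \mathbb{I}$ as in \cref{tracedistancedefinition}, each $|\Tr(E(\rho-\sigma))|$ is at most $d_1(\rho,\sigma)$ (flip $E\mapsto \mathbb{I}-E$ if the trace is negative, using $\Tr(\rho-\sigma)=0$), so adding and subtracting $\Tr(E\rho_2)$ and maximizing over $E$ gives the inequality directly. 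Either route is acceptable; the first is more self-contained, the second matches the operational framing the paper has already set up.
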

\begin{fact}  For all $\rho,\sigma\in \Scal(\Hcal)$,	\label{tracedistancedefinition} 
	$d_1(\rho,\sigma)= \underset{0\preceq E \preceq \mathbb{I}}{\max} \mathrm{Tr}(E(\rho-\sigma)).$
\end{fact}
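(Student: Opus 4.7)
The plan is to base the argument on the spectral decomposition of the Hermitian operator $\rho-\sigma$. Since $\rho$ and $\sigma$ are density operators, $\rho-\sigma$ is Hermitian with $\Tr(\rho-\sigma)=0$. I would decompose $\rho-\sigma=P-Q$ where $P,Q\in\Pcal(\Hcal)$ have mutually orthogonal supports, obtained by grouping the positive and negative eigenvalues in the spectral decomposition. Then $|\rho-\sigma|=P+Q$, so $\|\rho-\sigma\|_1=\Tr(P)+\Tr(Q)$, while $\Tr(\rho-\sigma)=0$ forces $\Tr(P)=\Tr(Q)$. Together these give $d_1(\rho,\sigma)=\Tr(P)$.

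Next, I would exhibit a specific $E$ that achieves the supremum, and then show no $E$ can exceed it. For the lower bound on the maximum, let $\Pi$ be the orthogonal projector onto the support of $P$; orthogonality of the supports yields $\Pi P=P$ and $\Pi Q=0$, so $\Tr(\Pi(\rho-\sigma))=\Tr(P)=d_1(\rho,\sigma)$, and clearly $0\preceq\Pi\preceq\mathbb{I}$. For the matching upper bound, given any $E$ with $0\preceq E\preceq\mathbb{I}$, I would write
\[
\Tr(E(\rho-\sigma))=\Tr(EP)-\Tr(EQ)\le\Tr(EP)\le\Tr(P)=d_1(\rho,\sigma).
\]
The first inequality uses $\Tr(EQ)\ge 0$ because $E,Q$ are both positive semidefinite; the second uses $E\preceq\mathbb{I}$ together with $P\succeq 0$, which gives $\Tr((\mathbb{I}-E)P)=\Tr(\sqrt{P}(\mathbb{I}-E)\sqrt{P})\ge 0$ by conjugation-positivity and the cyclic property of the trace.

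No serious obstacle is expected: this is the standard Holevo--Helstrom characterization of the trace distance. The only mildly delicate point is keeping the positivity manipulations straight, in particular that $\Tr(AB)\ge 0$ for positive semidefinite $A,B$, which follows from $A=\sqrt{A}\sqrt{A}$ together with cyclicity of the trace. Everything else is bookkeeping with the spectral decomposition and the trace-zero constraint on $\rho-\sigma$.
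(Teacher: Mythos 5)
Your proof is correct and is the standard Holevo--Helstrom argument. Note that the paper itself states this as a known fact without providing a proof, so there is nothing in the paper to compare against; your spectral-decomposition argument (splitting $\rho-\sigma$ into orthogonally supported positive and negative parts $P-Q$, using the trace-zero condition to get $\Tr(P)=\Tr(Q)=d_1(\rho,\sigma)$, achieving the bound with the projector onto $\mathrm{supp}(P)$, and bounding any $0\preceq E\preceq\mathbb{I}$ via $\Tr(EQ)\ge 0$ and $\Tr((\mathbb{I}-E)P)\ge 0$) is precisely the textbook proof and has no gaps.
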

\begin{fact}   Let $\rho,\sigma$ be two quantum states in $\mathcal{S}(\mathcal{H})$ and $E$ be a quantum operation on $\mathcal{S}(\mathcal{H})$.  Then,
	\[ d_1(E(\rho),E(\sigma)) \le d_1(\rho, \sigma). \]
	\label{tracedistanceproperty}
\end{fact}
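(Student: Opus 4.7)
The plan is to reduce to Fact~\ref{tracedistancedefinition} by transferring the maximizing ``test operator'' backwards through $E$ using the Heisenberg adjoint. Writing $d_1(E(\rho),E(\sigma)) = \Tr(F(E(\rho)-E(\sigma)))$ for some $F$ with $0 \preceq F \preceq \mathbb{I}$ (which exists by Fact~\ref{tracedistancedefinition} applied to $(E(\rho),E(\sigma))$), the strategy is to produce an operator $G$ with $0 \preceq G \preceq \mathbb{I}$ such that $\Tr(F\cdot E(A)) = \Tr(G \cdot A)$ for every Hermitian $A$, so that $d_1(\rho,\sigma)$ upper-bounds $d_1(E(\rho),E(\sigma))$ directly via Fact~\ref{tracedistancedefinition}.

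The natural choice is $G = E^\dagger(F)$, where $E^\dagger$ is the adjoint of $E$ with respect to the Hilbert--Schmidt inner product. The key step is to verify that $E^\dagger$ maps the operator interval $[0,\mathbb{I}]$ into itself. To see this, I would invoke a Kraus decomposition $E(A) = \sum_k K_k A K_k^\dagger$ (guaranteed since $E$ is CP), from which $E^\dagger(F) = \sum_k K_k^\dagger F K_k$. Complete positivity of $E^\dagger$ is then evident from the Kraus form, and the trace-preservation identity $\sum_k K_k^\dagger K_k = \mathbb{I}$ (which is how ``trace preserving'' manifests on the adjoint side) gives $E^\dagger(\mathbb{I}) = \mathbb{I}$. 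Combining positivity with unitality yields $0 \preceq E^\dagger(F) \preceq \mathbb{I}$ whenever $0 \preceq F \preceq \mathbb{I}$. With $G := E^\dagger(F)$ in hand, the chain
\[ d_1(E(\rho),E(\sigma)) = \Tr\!\big(F(E(\rho)-E(\sigma))\big) = \Tr\!\big(G(\rho-\sigma)\big) \le d_1(\rho,\sigma) \]
finishes the argument, invoking Fact~\ref{tracedistancedefinition} for the final inequality.

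The only substantive point is the ``unital'' part of the adjoint: positivity of $E^\dagger$ alone does not give the upper bound $E^\dagger(F) \preceq \mathbb{I}$, and it is here that trace preservation of $E$ is used essentially. As a more elementary alternative that bypasses the adjoint entirely, I could use the Jordan decomposition $\rho-\sigma = P - N$ with $P,N \succeq 0$ on orthogonal supports, so that $\|\rho-\sigma\|_1 = \Tr(P)+\Tr(N)$. Applying $E$ gives $E(\rho)-E(\sigma) = E(P)-E(N)$, and positivity of $E$ together with the triangle inequality for $\|\cdot\|_1$ and trace preservation yields $\|E(\rho)-E(\sigma)\|_1 \le \Tr(E(P))+\Tr(E(N)) = \Tr(P)+\Tr(N) = \|\rho-\sigma\|_1$, which is the same conclusion. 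Either route relies only on the defining CPTP properties of $E$ and not on any specific structure of $\rho$ or $\sigma$.
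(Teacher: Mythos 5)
Your proof is correct. The paper states Fact~\ref{tracedistanceproperty} as a standard fact without proof, so there is no in-paper argument to compare against; both of your routes are standard textbook derivations (e.g.\ Nielsen--Chuang, Theorem 9.2). The adjoint route correctly identifies that trace preservation of $E$ is what makes $E^\dagger$ unital, which in turn is what keeps $E^\dagger(F) \preceq \mathbb{I}$; positivity of $E^\dagger$ alone would not suffice, and you flag this explicitly. The Jordan-decomposition route is also sound: the key identities are $\|\rho-\sigma\|_1 = \Tr(P)+\Tr(N)$ from orthogonality of supports and $\|E(P)\|_1 = \Tr(E(P)) = \Tr(P)$ from positivity of $E(P)$ plus trace preservation. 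One small stylistic note: in the Jordan route you only use positivity of $E$, not complete positivity, so that argument actually establishes the stronger statement that any positive trace-preserving map is a contraction in trace distance (the adjoint route as written also only needs positivity of $E$, since $E$ positive implies $E^\dagger$ positive, though you invoked Kraus operators which presuppose CP). Either way the fact as stated follows.
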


\subsection{Quantum Space Bounded Computation}

There are two ways to define models of computation, one using uniform families of circuits and one using Turing machines~\cite{watrous}. Typically, these models are computationally equivalent, both in the classical case and the quantum case. For instance, it is known that polynomial time quantum Turing machines are equivalent to uniform families of quantum circuits of polynomial size~\cite{yao}. It is also known that logspace quantum Turing machines are equivalent to uniform families of quantum circuits of logarithmic width~\cite{fr}. With respect to computation with constraints on both time and space, few results are known. In the classical case, every deterministic Turing machine of space $S$ and time $T$ can be simulated by a logspace-uniform family of classical circuits of size $\poly(T)$ and width $O(S)$, conversely, every logspace-uniform family of classical circuits of size $T$ and width $S$ can be simulated by deterministic Turing machines of space $O(S\cdot\log T)$ and time $\poly(T,S)$~\cite{nick}. In particular, Turing machines of polylogarithmic space and polynomial time are computationally equivalent to logspace-uniform families of circuits of polynomial size and polylogarithmic width. Thus, without loss of generality, we can define classical algorithms of bounded space and bounded time as logspace-uniform families of circuits of bounded size and bounded width. We are unaware of such a result for quantum Turing machines. Nevertheless, in our paper, we define quantum algorithms based on the latter model, as it is easier to work with. We define space $S$ time $T$ quantum algorithms as logspace-uniform families of quantum circuits with $S$ qubits and $T$ operators. The formal definition is as follows.

\paragraph*{Quantum Algorithms:}

Let $\Gcal_U$ be a universal family of unitary operators for quantum computation, for instance, the Hadamard gate and the Toffoli gate. Let $\Gcal_M=\Gcal_U \cup \{ M\}$ (respectively $\Gcal_R=\Gcal_U\cup\{R\}$) include the measurement operator (respectively the reset operator) in addition to the previous operators.  

 Let $S,T:\Nbb\rightarrow \Nbb$ be computable functions and $\Gcal\in \{\Gcal_U,\Gcal_M,\Gcal_R,\Gcal_M\cup\Gcal_R\}$. A space $S=S(n)$ time $T=T(n)$ quantum algorithm $Q$ with input $x\in\{0,1\}^n$ consists of the initial state $\rho_0:=\ket{0^S}\bra{0^S}$ and a sequence of $T$ operators $E_{i,x}:\Scal(\Hcal_S)\rightarrow \Scal(\Hcal_S),E_{i,x}\in \Gcal$ for $i\in [T]$. Furthermore, this sequence is logspace uniform, that is, there is a classical deterministic logspace Turing Machine which on input $x$ outputs this sequence of operators along with the qubits on which they act. We use $Q^{\rho_0}(x):=E_{T,x} \cdots  E_{1,x}(\rho_0)=\left(\prod_{i=1}^T E_{i,x}\right)(\rho_0)\in \Scal(\Hcal_S)$ to refer to the final state of the algorithm. The output of the algorithm is defined to be the outcome on measuring the first qubit of the final state. Let $\Fcal=\{f_n:\{0,1\}^n\to\{0,1\} \}_{n\in\Nbb}$ be a family of partial boolean functions. We say that $\Fcal$ is computable by an algorithm if the algorithm on input $x\in \{0,1\}^n$ outputs $f_n(x)$ with probability at least $\frac{2}{3}$ (whenever $f_n(x)$ is well defined). For families of functions with  output of arbitrary length, we say that $\Fcal$ is computable by an algorithm if the algorithm on input $i\in\Nbb,x\in\{0,1\}^n$, computes the $i$-th bit of $f_n(x)$. 
 
An algorithm is said to be {\it   unitary} if the operators are from $\Gcal_U$, {\it   purely quantum} if the operators are from $\Gcal_M$ and simply {\it   quantum} if the operators are from $\Gcal_R\cup\Gcal_M$. The algorithm is said to have {\it   no intermediate measurements} if the operators are from $\Gcal_R$. We say that an algorithm $\Bcal$ simulates an algorithm $\Acal$ with error $\eps$ if for every $x\in\{0,1\}^*$ and $b\in\{0,1\}^*$ the probability that $\Acal(x)$ outputs $b$ and the probability that $\Bcal(x)$ outputs $b$ differ by at most $\eps$.
 
\paragraph*{Quantum Branching Programs with Randomness}
We now consider a model of quantum computation which is equipped with an additional classical randomness tape. The random string $r\in \{0,1\}^T$ in the randomness tape is read exactly once from left to right and on reading the bit $r_i$ at the $i$-th step, the program applies a quantum operator $E_{i,r_i,x}$. More formally, a quantum branching program $B^{\rho_0}_x(r)$ of space $S$ with input $x\in\{0,1\}^n$ and with {\it $T$ bits of randomness} consists of a sequence of $2T$ operators $E_{i,0,x},E_{i,1,x}:\mathcal{S}(\mathcal{H}_S)\rightarrow \mathcal{S}(\mathcal{H}_S)$ for $i\in[T]$, each of which is in $\Gcal_{R}$. Furthermore, this sequence is logspace uniform, that is, there is a classical deterministic logspace Turing Machine which on input $x$ outputs this sequence of operators. The branching program also has an initial state $\rho_0\in\Scal(\Hcal_S)$ (which is typically the all zeroes state) and takes an input string $r\in\{0,1\}^T$ in the randomness tape. We use $B^{\rho_0}(r):=E_{T,r_T,x}\cdots E_{1,r_1,x}(\rho_0)\in\Scal(\Hcal_S)$ to refer to the final state of the branching program on the  string $r$. For any distribution $D$ on $\{0,1\}^T$ we will denote by $B^{\rho_0}(D)$ the average final state $\mathbb{E}_{r\sim D} \sbra{ B^{\rho_0}(r)}$. The output of the branching program is defined to be the outcome on measuring the first qubit of $B^{\rho_0}(U_T)$. As before, we say that a branching program computes a family $\Fcal=\{f_n:\{0,1\}^n\to \{0,1\}\}_{n\in \Nbb}$ of functions if on input $x\in \{0,1\}^n$, the branching program outputs $f_n(x)$ with probability at least $\frac{2}{3}$ (whenever $f_n(x)$ is well defined). For families of functions with outputs of arbitrary length, we use the same definition as before.

\subsection{Quantum Entropic Quantities}

Let $XS$ be a possibly correlated bipartite quantum system, where $X$ is on $n$ qubits and $S$ is on $s$ qubits. Let $\rho_{XS}=\sum_x \ket{x}\bra{x}\rho_x$ be a classical-quantum state which is classical on $X$ and let $\rho_S\in \Scal(\Hcal_s)$.  We now define the {\it  min-entropy} of the state $\rho_{XS}$.
\[ H_{\min}(\rho_{XS}):= \sup \Big\{ \lambda \in \Rbb\text{ such that }   \rho_{XS} \preceq \frac{\mathbb{I}_{n+s}}{2^\lambda} \Big\} \]
The {\it  conditional min-entropy of} $\rho_{XS}$ {\it  relative to} $\sigma_S$ is defined as
\[ H_{\min}(\rho_{XS}|\sigma_S):= \sup \Big\{ \lambda\in \Rbb \text{ such that }   \rho_{XS} \preceq \frac{\mathbb{I}_n}{2^\lambda}\otimes \sigma_S \Big\}. \]
The {\it  conditional min-entropy} of $\rho_{XS}$ given $S$ is defined as
\[H_{\min}(\rho_{XS}|S) = \sup_{\sigma_S\in\Scal(\Hcal)} H_{\min} (\rho_{XS}|\sigma_S).\]
For any invertible $\sigma_S\in \Scal(\Hcal_s)$, the {\it conditional collision entropy} of $\rho_{XS}$ {\it relative to $\sigma_S$} is defined as
\[ H_2(\rho_{XS}|\sigma_S):= -\log \Tr\pbra{\pbra{(\mathbb{I}\otimes \sigma_S^{-1/4}) \rho_{XS}(\mathbb{I}\otimes \sigma_S^{-1/4}) }^2}. \]
The {\it conditional collision entropy} of $\rho_{XS}$ is defined as 
\[ H_2(\rho_{XS}|S):=\sup_{\sigma_S\in\Scal(\Hcal)} H_2(\rho_{XS}|\sigma_S). \]

It has been shown that the conditional collision entropy is bounded from below by the conditional min-entropy. The following fact appears as Remark 5.3.2 in~\cite{rennerthesis}.

\begin{fact} \label{mintwoentropy} For any classical-quantum state $\rho_{XS}$ and a quantum state $\sigma_S$, we have $ H_{\min}(\rho_{XS}|\sigma_S) \le H_2(\rho_{XS}|\sigma_S)$.
\end{fact}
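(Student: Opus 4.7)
The plan is to unpack both definitions and show that any $\lambda$ witnessing the min-entropy bound also bounds the collision entropy from below. Set $\lambda := H_{\min}(\rho_{XS}|\sigma_S)$, so by definition $\rho_{XS} \preceq 2^{-\lambda}\,\mathbb{I}_n \otimes \sigma_S$ (if the supremum is not attained, work with $\lambda - \varepsilon$ and take $\varepsilon \to 0$). Writing $M := (\mathbb{I} \otimes \sigma_S^{-1/4})\,\rho_{XS}\,(\mathbb{I} \otimes \sigma_S^{-1/4})$, the goal reduces to showing $\Tr(M^2) \le 2^{-\lambda}$, since then $H_2(\rho_{XS}|\sigma_S) = -\log \Tr(M^2) \ge \lambda$.

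First, I would conjugate the operator inequality $\rho_{XS} \preceq 2^{-\lambda}\,\mathbb{I}_n \otimes \sigma_S$ on both sides by $\mathbb{I} \otimes \sigma_S^{-1/4}$, which is a well-defined positive operator by the invertibility assumption on $\sigma_S$ and so preserves the semidefinite order. This yields
\[
M \;\preceq\; 2^{-\lambda}\,(\mathbb{I} \otimes \sigma_S^{-1/4})(\mathbb{I}_n \otimes \sigma_S)(\mathbb{I} \otimes \sigma_S^{-1/4}) \;=\; 2^{-\lambda}\,\mathbb{I}_n \otimes \sigma_S^{1/2}.
\]

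Second, I would invoke the standard fact that $A \preceq B$ and $C \succeq 0$ imply $\Tr(AC) \le \Tr(BC)$ (which follows from $\Tr((B-A)C) = \Tr(C^{1/2}(B-A)C^{1/2}) \ge 0$). Applied with $A = M$, $B = 2^{-\lambda}\,\mathbb{I}_n \otimes \sigma_S^{1/2}$, and $C = M$ — noting that $M \succeq 0$ because it is a conjugation of the positive operator $\rho_{XS}$ — this gives
\[
\Tr(M^2) \;\le\; 2^{-\lambda}\,\Tr\!\left((\mathbb{I}_n \otimes \sigma_S^{1/2})\,M\right).
\]

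Third, I would simplify the right-hand side using cyclicity of the trace together with the cancellation $\sigma_S^{-1/4}\,\sigma_S^{1/2}\,\sigma_S^{-1/4} = \mathbb{I}_s$:
\[
\Tr\!\left((\mathbb{I}_n \otimes \sigma_S^{1/2})(\mathbb{I} \otimes \sigma_S^{-1/4})\,\rho_{XS}\,(\mathbb{I} \otimes \sigma_S^{-1/4})\right) \;=\; \Tr(\rho_{XS}) \;=\; 1,
\]
so $\Tr(M^2) \le 2^{-\lambda}$ and $H_2(\rho_{XS}|\sigma_S) \ge \lambda$, as required.

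The main conceptual step is the middle one: recognizing that one can turn the quadratic quantity $\Tr(M^2)$ into a linear form via the operator inequality by letting $M$ itself play the role of the positive test operator $C$. The only other subtlety is the use of invertibility of $\sigma_S$ to define $\sigma_S^{-1/4}$ and justify the algebraic cancellation; otherwise the argument is two lines of semidefinite manipulations and cyclic trace.
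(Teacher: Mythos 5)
Your proof is correct. Note that the paper does not give its own proof of this fact; it simply cites Remark~5.3.2 of Renner's thesis, and your argument is essentially the standard one used there: take the witness $\lambda$ for the min-entropy bound, conjugate the operator inequality $\rho_{XS}\preceq 2^{-\lambda}\,\mathbb{I}\otimes\sigma_S$ by $\mathbb{I}\otimes\sigma_S^{-1/4}$, linearize $\Tr(M^2)$ by using $M$ itself as the positive test operator, and finish by cyclicity of the trace together with $\Tr(\rho_{XS})=1$. All three steps are valid, and your remark that the supremum need not be attained is a harmless precaution (in fact the defining set $\{\lambda:\rho_{XS}\preceq 2^{-\lambda}\,\mathbb{I}\otimes\sigma_S\}$ is closed, so the supremum is attained whenever it is finite); likewise your appeal to invertibility of $\sigma_S$ is exactly the hypothesis the paper builds into the definition of $H_2(\cdot\,|\,\sigma_S)$.
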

It is known that conditioning on $s$ qubits cannot decrease the min-entropy by more than $s$. The following fact follows from Lemma 3.1.10 and Definition 3.1.2~\cite{rennerthesis}.  
\begin{fact} \label{conditioning} For any classical-quantum state $\rho_{XS}$, we have $ H_{\min} (\rho_{XS}|S) \ge H_{\min}(\rho_{XS})-s $. \end{fact}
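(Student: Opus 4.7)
The plan is to unwind the definitions and choose the maximally mixed state as the witness for $\sigma_S$. Specifically, let $\lambda := H_{\min}(\rho_{XS})$. I will first show that for every $\lambda' < \lambda$ we have $H_{\min}(\rho_{XS}|S) \ge \lambda' - s$, and then take the supremum.

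By the definition of min-entropy, for any $\lambda' < \lambda$ we have the operator inequality
\[ \rho_{XS} \preceq 2^{-\lambda'} \mathbb{I}_{n+s} = 2^{-\lambda'} \pbra{\mathbb{I}_n \otimes \mathbb{I}_s}. \]
Next I would pick $\sigma_S := \mathbb{I}_s / 2^s \in \Scal(\Hcal_s)$, which is a valid density operator since it has unit trace and is positive semidefinite. Rewriting the right-hand side in terms of $\sigma_S$,
\[ \rho_{XS} \preceq 2^{-\lambda'} \cdot 2^s \cdot \pbra{\mathbb{I}_n \otimes \sigma_S} = 2^{-(\lambda' - s)} \pbra{\mathbb{I}_n \otimes \sigma_S}. \]
By the definition of the conditional min-entropy relative to $\sigma_S$, this inequality gives $H_{\min}(\rho_{XS}|\sigma_S) \ge \lambda' - s$. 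Taking the supremum over $\sigma_S \in \Scal(\Hcal_s)$ in the defining expression for $H_{\min}(\rho_{XS}|S)$ then yields $H_{\min}(\rho_{XS}|S) \ge \lambda' - s$.

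Finally, I would take the supremum over $\lambda' < \lambda = H_{\min}(\rho_{XS})$, obtaining
\[ H_{\min}(\rho_{XS}|S) \ge H_{\min}(\rho_{XS}) - s, \]
which is the claimed bound. There is no real obstacle: the argument is a direct manipulation of the operator inequality appearing in the definition of min-entropy, using the trivial identity $\mathbb{I}_{n+s} = 2^s \cdot \pbra{\mathbb{I}_n \otimes (\mathbb{I}_s/2^s)}$. The only minor subtlety is handling the supremum correctly so as not to assume the sup in the definition of $H_{\min}(\rho_{XS})$ is attained, which is why I work with $\lambda' < \lambda$ and pass to the limit at the end. Note also that the classical-quantum structure of $\rho_{XS}$ is not actually needed for this particular bound; the proof only uses positivity and the dimension of $S$.
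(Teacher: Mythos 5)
Your proof is correct. The paper does not include its own argument for this fact---it simply cites Lemma 3.1.10 and Definition 3.1.2 of Renner's thesis---but the argument you give (taking the maximally mixed state $\sigma_S = \mathbb{I}_s/2^s$ as the witness in the supremum and rewriting $\mathbb{I}_{n+s} = 2^s\, (\mathbb{I}_n \otimes \sigma_S)$) is precisely the standard derivation of this bound, and your remark that the classical-quantum structure of $\rho_{XS}$ plays no role here is also accurate.
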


\subsection{Extractors Resilient to Quantum Side Information}

We make use of $\delta$-biased spaces over $\{0,1\}^n$. There are many known constructions~\cite{naornaor,alonetal} of such spaces. We make use of a simpler construction based on the work of~\cite{alonetal}. We do this mainly so that we can easily argue about the time and space complexity.

We use finite fields of characteristic two to define our  $\delta$-biased spaces. We will use the fact that given any $m\in \Nbb$ such that $m=2\cdot 3^i$ for some $i\in\Nbb$, we can efficiently construct the finite field of characteristic two of size $2^m$. Futhermore, addition and multiplication of two elements in this field can be done in space and time at most $m\cdot \polylog (m)$. Similarly, raising elements of the field to $k$-th powers can be done in space $m\cdot \polylog (m)+ \log k$ and time $m\cdot \polylog( m)\cdot \log k$ by repeated squaring. The proofs of these facts can be derived from properties about finite fields and can be found in \cite{galoistextbook}. We defer this discussion to the appendix. We denote the finite field of size $2^m$ by $\mathbb{F}_{2^m}$. 

\paragraph*{$\delta$-biased spaces} Let $\mathcal{C}\subseteq\{0,1\}^n$ be a multiset and $\delta\ge0$. We say that $\mathcal{C}$ is a $\delta$-biased space if for all $S\subseteq [n]$, we have $\abs{\E_{x\sim \mathcal{C}} \sbra{\oplus_{i\in S}x_i} -\tfrac{1}{2}} \le \delta/2$. We make use of the following construction of $\delta$-biased spaces over $\{0,1\}^n$. This is implicit in the work of~\cite{alonetal}. While their construction obtains a better dependence on the field size, our variant is weaker but suffices for our purposes. 

Let $n\in \Nbb$. The $\delta$-biased space over $\{0,1\}^n$ is defined as follows. Let $m\in \Nbb$ be any integer such that $m\ge  \log(n/\delta)$ and $m=2\cdot 3^i$ for some $i\in [n]$.  Let $\langle \cdot,\cdot\rangle_2$ denote the inner product over $\mathbb{F}_2$. Define $A:\mathbb{F}_{2^m}\times \mathbb{F}_{2^m}\to \{0,1\}^n$ at $\alpha,\beta\in\mathbb{F}_{2^m}$ by  \[A(\alpha,\beta)=  \pbra{\langle 1,\beta\rangle_2,\langle \alpha,\beta\rangle_2, \ldots, \langle \alpha^{n-1},\beta\rangle_2 }.\] 
The proof of the following lemma follows from similar arguments as in~\cite{alonetal}.

\begin{lemma}\label{deltabias} Let $\Acal:=\cbra{\!\cbra{ A(\alpha,\beta) \mid \alpha,\beta\in\mathbb{F}_{2^{m }} }\!}$. Then, $\Acal$ is a $\delta$-biased space over $\{0,1\}^n$. Additionally, there is a classical deterministic algorithm which given input $\alpha,\beta\in\Fbb_2^m$ and $i\in[n]$, computes the $i$-th coordinate of $A(\alpha,\beta)$, furthermore, this algorithm uses space $m\cdot \polylog (m)+\log n$ and time $m\cdot \polylog( m)\cdot \log n$.
\end{lemma}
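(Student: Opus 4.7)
The plan is to prove the bias bound by a standard character-sum argument, and then verify the complexity claim using fast field arithmetic. Fix a nonempty $S \subseteq [n]$ (the case $S = \emptyset$ is vacuous by the usual convention on bias). The key observation is that the $\mathbb{F}_2$-inner product distributes over addition in $\mathbb{F}_{2^m}$:
\[
\bigoplus_{i \in S} A(\alpha,\beta)_i \;=\; \bigoplus_{i \in S} \langle \alpha^{i-1}, \beta \rangle_2 \;=\; \langle p_S(\alpha),\beta \rangle_2,
\]
where $p_S(\alpha) := \sum_{i \in S} \alpha^{i-1}$ is a \emph{nonzero} polynomial in $\mathbb{F}_{2^m}[x]$ of degree at most $n-1$.

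Next I would switch to $\pm 1$ characters, since the desired inequality $\bigl| \E[\oplus_{i\in S} A(\alpha,\beta)_i] - \tfrac12 \bigr| \le \delta/2$ is equivalent to $\bigl|\E[(-1)^{\langle p_S(\alpha),\beta\rangle_2}]\bigr| \le \delta$. Now condition on $\alpha$: when $p_S(\alpha) = 0$ the character equals $1$; when $p_S(\alpha) \neq 0$, the map $\beta \mapsto \langle p_S(\alpha),\beta\rangle_2$ is a nonzero $\mathbb{F}_2$-linear functional on $\mathbb{F}_2^m$, and hence its character averages to $0$ over uniform $\beta$. Therefore the whole expectation equals $\Pr_\alpha[p_S(\alpha) = 0]$, which is at most $(n-1)/2^m$ since a nonzero polynomial of degree $\le n-1$ has at most $n-1$ roots in the field. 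The assumption $m \ge \log(n/\delta)$ then gives $(n-1)/2^m \le n/2^m \le \delta$, which yields the bias bound of $\delta/2$ after the $\pm 1$ conversion.

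For the algorithmic part, the assumption on $m$ (being of the form $2\cdot 3^i$ and satisfying $m \ge \log(n/\delta)$) ensures that $\mathbb{F}_{2^m}$ is constructible and that addition and multiplication in this field run in space and time $m\cdot\polylog(m)$, as stated in the excerpt. Repeated squaring therefore computes $\alpha^{i-1}$ in space $m\cdot\polylog(m) + \log n$ and time $m\cdot\polylog(m) \cdot \log n$, after which a final $\mathbb{F}_2$-inner product with $\beta$ costs an additional $m\cdot\polylog(m)$ in both resources, giving the stated bounds. The only mild obstacle here is the space accounting for repeated squaring: one must re-use the field-arithmetic workspace across squarings while maintaining only a $\log n$-bit counter to track the exponent, which is a standard exercise.
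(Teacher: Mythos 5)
Your proof is correct and is essentially the standard argument the paper points to by citing \cite{alonetal} (the paper itself only says the lemma ``follows from similar arguments as in'' that work, without spelling them out). The reduction to the character sum via $\oplus_{i\in S}\langle\alpha^{i-1},\beta\rangle_2=\langle p_S(\alpha),\beta\rangle_2$, the conditioning on whether $p_S(\alpha)=0$, the root-counting bound $(n-1)/2^m\le\delta$, the $\pm1$ conversion, and the repeated-squaring space/time accounting are all exactly as intended. Two minor remarks: the paper's stated definition of a $\delta$-biased space literally quantifies over all $S\subseteq[n]$, which is vacuously false for $S=\emptyset$; your implicit restriction to nonempty $S$ is the correct reading. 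Also, note that $p_S$ should be understood as the formal polynomial $\sum_{i\in S}x^{i-1}\in\Fbb_2[x]$ (with coefficients in $\{0,1\}$), which is nonzero precisely because $S\neq\emptyset$; you write ``$p_S(\alpha)$'' for both the polynomial and its evaluation, which is a harmless abuse but worth being explicit about when invoking the root bound.
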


We now define weak quantum extractors, i.e., extractors that are resilient to quantum side information. 

\begin{definition}[Weak Quantum Extractor]~\cite{fs} Let $t,\eps\ge 0$. 
	A function $E:\{0,1\}^n \times \{0,1\}^d\rightarrow \{0,1\}^m$ is called a $(t,\eps)$-weak quantum extractor if the following holds. Let $\rho_{XS}=\E_{x\sim \{0,1\}^n}\sbra{\ket{x}\bra{x}\rho_x} \in \Hcal_{n+s}$ be any classical-quantum state that is classical and uniform on $X$. Suppose $H_2(\rho_{XS}|S)\ge t$, then
	\[ \vabs{ \E_{\substack{x\sim\{0,1\}^n\\y\sim \{0,1\}^d}}\sbra{\ket{E(x,y)}\bra{E(x,y)}\rho_x} - \frac{\mathbb{I}_m}{2^m}\otimes \E_{x\sim\{0,1\}^n}[ \rho_x]}_1 \le \eps.\] 
 The seedlength of the extractor is defined to be $d$.
\end{definition}

We make use of the following family of weak quantum extractors. 

\begin{theorem} \label{fsextractor} ~\cite{fs} Let $\delta>0, d,n\in \Nbb$, and $\Acal=\{a_1,\ldots,a_{2^d}\}$ be a $\delta$-biased space over $\{0,1\}^n$ of size $2^d$. Let $\ext:\{0,1\}^n\times \{0,1\}^d \to\{0,1\}^n$ be defined at $x\in\{0,1\}^n,i\in\{0,1\}^d$ by $ \ext(x,i):=a_i\oplus x.$ For $0\le t\le n$, $\ext$ is a $(t,\delta\cdot 2^{(n-t)/2})$-weak quantum extractor.  
\end{theorem}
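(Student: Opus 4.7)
The plan is to adopt the standard Renner-style recipe: reduce the trace-distance bound to a conditional $L_2$ bound via a classical--quantum Cauchy--Schwarz inequality, and then evaluate the $L_2$ quantity via Fourier analysis over $\mathbb{F}_2^n$, where the $\delta$-biased property of $\mathcal{A}$ controls the nontrivial character sums. First, write $\rho_{XS} = \E_{x\sim\{0,1\}^n}[\ket{x}\bra{x}\otimes \rho_x]$ and reindex the output state by the bijection $y = a_i\oplus x$ (for each fixed $i$) to obtain
$$\Delta \;:=\; \E_{x,i}\sbra{\ket{\ext(x,i)}\bra{\ext(x,i)}\otimes \rho_x} - \tfrac{\mathbb{I}}{2^n}\otimes \bar{\rho}_S \;=\; \tfrac{1}{2^n}\sum_y \ket{y}\bra{y}\otimes \Delta_y,$$
where $\Delta_y = \E_i[\rho_{y\oplus a_i}] - \bar{\rho}_S$ and $\bar{\rho}_S = \E_x[\rho_x]$. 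Pick $\sigma_S$ that approximately attains the supremum in $H_2(\rho_{XS}|S)\ge t$, perturbing if necessary to ensure invertibility, so that $\mathrm{Tr}[((\mathbb{I}\otimes \sigma_S^{-1/4})\rho_{XS}(\mathbb{I}\otimes \sigma_S^{-1/4}))^2]\le 2^{-t}$.

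Next, I bound the weighted Hilbert--Schmidt norm $\|(\mathbb{I}\otimes \sigma_S^{-1/4})\Delta(\mathbb{I}\otimes \sigma_S^{-1/4})\|_2$ by Fourier analysis. Let $f(x) := \sigma_S^{-1/4}\rho_x\sigma_S^{-1/4}$ and $\hat f(s) := 2^{-n}\sum_x (-1)^{\langle s,x\rangle} f(x)$ be its operator-valued Fourier coefficients for $s\in\{0,1\}^n$. Substituting $f(x) = \sum_s \hat f(s)(-1)^{\langle s,x\rangle}$ and noting $\hat f(0) = \sigma_S^{-1/4}\bar{\rho}_S\sigma_S^{-1/4}$, a direct computation gives
$$\sigma_S^{-1/4}\Delta_y\sigma_S^{-1/4} \;=\; \sum_{s\ne 0}\hat f(s)\,(-1)^{\langle s,y\rangle}\, b(s),\qquad b(s) := \E_i[(-1)^{\langle s,a_i\rangle}].$$
By the $\delta$-biased property of $\mathcal{A}$, $|b(s)|\le \delta$ for every $s\ne 0$. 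Expanding $\|\cdot\|_2^2$, applying the orthogonality relation $2^{-n}\sum_y (-1)^{\langle s+s',y\rangle} = \mathbbm{1}_{s=s'}$, and invoking the operator-valued Parseval identity $\sum_s \mathrm{Tr}[\hat f(s)^2] = 2^n\cdot \mathrm{Tr}[((\mathbb{I}\otimes \sigma_S^{-1/4})\rho_{XS}(\mathbb{I}\otimes \sigma_S^{-1/4}))^2]$, I obtain
$$\|(\mathbb{I}\otimes \sigma_S^{-1/4})\Delta(\mathbb{I}\otimes \sigma_S^{-1/4})\|_2^2 \;\le\; \tfrac{\delta^2}{2^n}\cdot 2^n\cdot 2^{-t} \;=\; \delta^2\cdot 2^{-t}.$$

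Finally, I convert this $L_2$ estimate into a trace-norm bound via the classical--quantum Cauchy--Schwarz inequality (analogous to Lemma 5.2.3 of~\cite{rennerthesis}): for any Hermitian operator $M$ classical on an $n$-bit register $X$ and any density operator $\sigma_S$, one has $\|M\|_1 \le \sqrt{2^n}\cdot \|(\mathbb{I}\otimes \sigma_S^{-1/4})M(\mathbb{I}\otimes \sigma_S^{-1/4})\|_2$. Applying this to $\Delta$ yields $\|\Delta\|_1 \le \sqrt{2^n}\cdot \delta\cdot 2^{-t/2} = \delta\cdot 2^{(n-t)/2}$, which is precisely the claim. The main technical care is in the Cauchy--Schwarz step (a short block-wise inequality combined with Cauchy--Schwarz over the classical register $y$) and in preserving Hermiticity throughout the operator-valued Fourier expansion, which holds because $\rho_x$ and $\sigma_S^{-1/4}$ are Hermitian and characters satisfy $-s=s$ in $\mathbb{F}_2^n$.
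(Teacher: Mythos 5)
The paper does not actually prove this theorem; it defers to Theorem 3.2 of Fehr--Schaffner~\cite{fs}, observing only that a single $\delta$-biased set is a special case of their $\delta$-biased \emph{family}. Your proposal is therefore not a re-proof by a different method but a correct, self-contained reconstruction of the [FS08]/[Ren08] argument that the paper invokes as a black box, and it checks out: the reindexing $y=a_i\oplus x$ makes $\Delta$ block-diagonal in $y$; expanding $f(x)=\sigma_S^{-1/4}\rho_x\sigma_S^{-1/4}$ in operator-valued characters lets the $\delta$-bias condition damp each nonzero Fourier mode by $|b(s)|\le\delta$; character orthogonality over $y$ together with the operator Parseval identity reduce the weighted $\|\cdot\|_2^2$ of $\Delta$ to $\delta^2\cdot 2^{-t}$ (here the Hermiticity of $\hat f(s)$, and hence $\Tr[\hat f(s)^2]\ge 0$, is precisely what licenses bounding $\sum_{s\neq 0}$ by the full sum $\sum_s$); and the final step is Renner's Cauchy--Schwarz lemma applied with $\tau=\mathbb{I}_n\otimes\sigma_S$, which gives the prefactor $\sqrt{\Tr[\tau]}=\sqrt{2^n}$ via $\Tr[UM]\le\|\tau^{1/4}U\tau^{1/4}\|_2\cdot\|\tau^{-1/4}M\tau^{-1/4}\|_2\le\sqrt{\Tr[\tau]}\cdot\|\tau^{-1/4}M\tau^{-1/4}\|_2$. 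The one place worth tightening: rather than ``pick $\sigma_S$ approximately attaining the supremum and perturb for invertibility,'' it is cleaner to run the entire chain for an arbitrary invertible $\sigma_S$, obtain $\|\Delta\|_1\le\delta\cdot 2^{n/2}\cdot 2^{-H_2(\rho_{XS}\mid\sigma_S)/2}$, and then take the supremum over $\sigma_S$ at the very end (the supremum over invertible $\sigma_S$ coincides with the unrestricted one by continuity), which removes both ad hoc caveats from your write-up simultaneously.
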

\Cref{fsextractor} follows from Theorem 3.2 in~\cite{fs}. We remark that the result in~\cite{fs} is stated in terms of a $\delta$-biased {\it family} of distributions, however, we restrict ourselves to the case that there is one $\delta$-biased distribution. We derive the following corollary.

\begin{corollary}\label{fsextractorcorollary}
For any $n,t\in \Nbb$ and $\delta>0$, the function $\ext:\{0,1\}^n\times \{0,1\}^d\to \{0,1\}^n$ is a $(t,\delta)$-weak quantum extractor provided $d\ge \Theta(n-t + \log n+ \log(1/\delta) +O(1))$. Furthermore, there is a deterministic algorithm of space $O(d\cdot\polylog (d) + \log n)$ and time $d\cdot \polylog (d)\cdot \log n$ which computes any coordinate of the output of this extractor.  Additionally, for every fixed seed $y\in \{0,1\}^d$, the function $\ext_y:\{0,1\}^n\to \{0,1\}^n$ defined by $\ext_y(x)=\ext(x,y)$ is a bijection.
\end{corollary}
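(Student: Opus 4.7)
The plan is to derive the corollary as a direct specialization of \Cref{fsextractor} by instantiating its $\delta$-biased space with the one from \Cref{deltabias} and rebalancing the bias parameter. To upgrade a $(t,\delta'\cdot 2^{(n-t)/2})$-extractor into a $(t,\delta)$-extractor, I would set $\delta':=\delta\cdot 2^{-(n-t)/2}$, so that $\log(1/\delta') = (n-t)/2 + \log(1/\delta)$. \Cref{deltabias} requires picking $m\ge \log(n/\delta')$ with $m=2\cdot 3^i$; the natural choice is the smallest such $m$ exceeding this lower bound, giving $m = \Theta\!\left(n-t + \log n + \log(1/\delta)\right)$ since the quantization $m=2\cdot 3^i$ costs at most a multiplicative factor of $3$. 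The seed is the pair $(\alpha,\beta)\in \mathbb{F}_{2^m}\times \mathbb{F}_{2^m}$ indexing $\Acal$, so $d = 2m$ satisfies the claimed lower bound.

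For the algorithmic claim, I would use that $\ext(x,y) = a_y\oplus x$, where $a_y = A(\alpha,\beta)$ for $y=(\alpha,\beta)$. Computing the $i$-th coordinate of the output therefore reduces to computing the $i$-th coordinate of $a_y$ (from \Cref{deltabias}) and XOR-ing with $x_i$, a constant-space, constant-time postprocessing. \Cref{deltabias} performs this in space $m\cdot \polylog(m) + \log n$ and time $m\cdot \polylog(m)\cdot \log n$; substituting $m = d/2$ yields exactly the claimed bounds $O(d\cdot \polylog(d) + \log n)$ and $d\cdot \polylog(d)\cdot \log n$. The bijectivity of $\ext_y$ is immediate: for each fixed $y$, the map $x\mapsto a_y\oplus x$ is a translation on $\mathbb{F}_2^n$, which is its own inverse.

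There is essentially no hard step here; the main care needed is the bookkeeping around the constraint $m = 2\cdot 3^i$ (required for the efficient finite-field construction underlying \Cref{deltabias}) and the verification that rounding $m$ up loses only a constant factor, so the asymptotic expressions for $d$ and the resource bounds are unaffected. All three parts of the corollary then follow directly from \Cref{fsextractor} and \Cref{deltabias}.
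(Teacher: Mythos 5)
Your proof is correct and matches the paper's intended derivation: the paper states \Cref{fsextractorcorollary} immediately after \Cref{fsextractor} with "We derive the following corollary" and leaves the parameter bookkeeping implicit, which is exactly what you have filled in — rebalancing $\delta'$ to absorb the $2^{(n-t)/2}$ factor, rounding $m$ up to the nearest $2\cdot 3^i$ (a constant-factor loss), setting $d=2m$, and reading off the resource bounds and the bijection property from \Cref{deltabias} and the definition $\ext(x,y)=a_y\oplus x$. No gaps.
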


The property about the extractor being a bijection for every fixed seed turns out to be a useful property with regards to unitary (reversible) simulation.

\section{The INW Pseudorandom Generator}

\label{sec:inw}
We focus on a specific instantiation of the INW generator which uses the aforementioned extractors. We do this for two reasons: firstly, these extractors are known to be resilient to quantum side information; secondly, their time and space complexity is easier to analyze. Our construction is as follows. 

Fix parameters $T,S,\eps>0$ and let $N,M$ be integers such that
\[ N=\Theta( \log T +  S + \log (1/\eps) )\quad \quad \text{and}\quad \quad M=\lceil \log T\rceil. \]
Since we assume that $S\ge \log T$, we have $N=\Theta(S+\log(1/\eps))$. 
For each $i\in [M]$, let $\ext^{(i)}:\{0,1\}^{i N}\times \{0,1\}^N\to\{0,1\}^{iN}$ be as defined in \Cref{fsextractorcorollary} with parameters $n=iN,d=N,t=iN-S$ and $\eps=\eps/T^2$. We use $\ext^{(i)}_{s}(x)$ to denote the output of the extractor on input $x\in \{0,1\}^{iN}$ and seed $s\in \{0,1\}^N$. 

\paragraph*{Computational Complexity of Our Extractors} Let $i\in[M]$ and $j\in [iN]$. Note that computing the $j$-th coordinate of $\ext^{(i)}(z,s)$ where $z\in\{0,1\}^{iN}$ and $s\in\{0,1\}^N$ requires computing the $j$-th coordinate of $A(s)\in\{0,1\}^{iN}$. Due to~\Cref{fsextractorcorollary}, this can be done in space $N\cdot \polylog (N) + \log (iN)\le N\cdot \polylog (N)$ and time $N\cdot \polylog (N) \cdot \log (iN)\le N\cdot \polylog (N)$. This implies that for $i\le M$, the function $\ext^{(i)}$ is computable in space and time at most $N\cdot \polylog (N)$. We use the following facts about reversible simulation of deterministic computation. It is known that deterministic Turing Machines of space $S'$ and time $T'$ can be simulated by reversible Turing Machines of space $O(S'\cdot \log T')$ and time $\poly(T')$~\cite{bennett}. A different analysis of the algorithm of~\cite{bennett} shows that when $S'=\Theta(T')$ (which is indeed the case for our range of parameters) the algorithm can be reversibly simulated in space $O(S')$ and time $O(T')$~\cite{levinesherman}.~\footnote{This can also be seen directly by simply copying bits into fresh memory whenever they are erased. Since the time complexity of the original algorithm is comparable to the space complexity, this step is not costly.} In particular, it follows that $\ext^{(i)}$ can be computed by \emph{unitary} quantum algorithms in space $N\cdot \polylog (N)$ and time $N\cdot \polylog (N)$.  (The time complexity is with regards to computing any coordinate of the output.)

\paragraph*{The INW generator} The INW generator~\cite{inw} for space $S$ branching programs with $T$ input bits is defined recursively as follows. For $i\in \mathbb{N}$, the $i$-th generator $G_i:\{0,1\}^{N} \times \{0,1\}^{iN} \to\{0,1\}^{2^i}$ is defined at $x\in \{0,1\}^{N}$ and $s_1,\ldots,s_i \in\{0,1\}^N$ by
\[G_0(x)=x_1 \]
\[ G_i (x , s_1,\ldots,s_i) : = G_{i-1}(x,s_1,\ldots,s_{i-1}) \circ G_{i-1}(\ext^{(i)}_{s_i}(x, s_1,\ldots,s_{i-1})) .\]
Here, $x\circ y$ denotes the concatenation of strings $x$ and $y$. The generator $G_M$ naturally defines a binary tree of depth $M$ as follows. Consider a binary tree of depth $M$ where we number the layers from bottom to top, that is, the root has height $M$ and the leaves have height 1. We label each node by a string in $\{0,1\}^{N\cdot (M+1)}$ as follows. The root is labelled with the input $(x,s_1,\ldots,s_M)$ to $G_M$. Given a label $(x',s'_1,\ldots,s'_M)$ at any node at height $i\le M$ where $x',s'_1,\ldots,s'_M\in\{0,1\}^N$, the label at the left child is the same as that of its parent, while the label at the right child is $(\ext_{s'_i}^{(i)}(x',s'_1,\ldots,s'_{i-1}),s'_i,\ldots,s'_M)$.  The output of a leaf is defined to be the first coordinate of the label of the leaf. Let $j\in [T]$. Note that the binary expansion of $j$ defines a path from the root to the leaf of the tree. Observe that the leaf obtained by traversing the $j$-th path outputs the $j$-th coordinate of $G_M(x,s_1,\ldots,s_M)$.

Note that $G_M$ stretches $N\cdot (M+1)$ uniform bits to at least $T$ bits. This generator may produce more bits than necessary, but we may truncate output to the first $T$ bits. The inputs to $G_M$ are of length $O((S+ \log(1/\eps))\cdot \log T)$.  We refer to $G_M$ as the INW generator for space $S$ time $T$ algorithms. 

\begin{claim}\label{spacetimeclaim}
	The INW generator $G_M$ can be computed by  { unitary} quantum algorithms in space $O(M\cdot N)$ and time $\poly(M,N).$ That is, there is a  {unitary} quantum algorithm which given input $(x,s_1,\ldots,s_M)$ for $x,s_1,\ldots,s_M\in \{0,1\}^N$ and a coordinate $i\in [T]$, runs in space $O(M\cdot N)$ and time $M^2\cdot N^2\cdot \polylog (N)$ and outputs the $i$-th coordinate of $G(x,s_1,\ldots,s_M)$.
\end{claim}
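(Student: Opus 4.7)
The plan is to design a reversible circuit that, given $(x, s_1, \ldots, s_M)$ and a coordinate $i$, walks the root-to-leaf path of the binary tree associated with $G_M$ determined by the binary expansion of $i$, applying the relevant extractor \emph{in place} each time it traverses a right edge. The starting observation is that by \cref{fsextractorcorollary}, the extractor $\ext^{(h)}_s\colon \{0,1\}^{hN} \to \{0,1\}^{hN}$ has the form $\ext^{(h)}_s(z) = A(s) \oplus z$ for an $hN$-bit string $A(s)$ from the $\delta$-biased space of \cref{deltabias}. In particular, $\ext^{(h)}_s$ is a bijection for every fixed seed, and the parent-to-right-child transformation on a label $(x',s'_1,\ldots,s'_M)$ reduces to XOR-ing $A(s'_h)$ into the first $hN$ bits, which can be executed unitarily using only workspace large enough to hold one bit of $A$ at a time.

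Concretely, I maintain a label register $R$ of $(M+1)N$ qubits initialized to $(x,s_1,\ldots,s_M)$, an index register $J$ of $M$ qubits holding the binary expansion $j_M j_{M-1}\cdots j_1$ of $i$, a single output qubit $O$, and a workspace $W$ of $O(N \polylog N)$ qubits. The computation runs in three phases. (1) \emph{Walk down}: for $h = M, M-1, \ldots, 1$ and each $k \in \{1, \ldots, hN\}$, compute the $k$-th bit of $A$ applied to the current contents of the qubits at positions $hN+1,\ldots,(h+1)N$ of $R$ into a single qubit of $W$, apply a Toffoli controlled on $J_h$ and that qubit of $W$ targeting $R_k$, then reverse the $A$-subroutine so that $W$ returns to $\ket{0}$. (2) \emph{Copy}: apply a CNOT from $R_1$, which now holds the first coordinate of the leaf label and hence the $i$-th output bit, to $O$. (3) \emph{Uncompute}: run phase (1) backwards to restore $R$ to $(x,s_1,\ldots,s_M)$. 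At the end $R$, $J$, and $W$ are all unchanged from their initial values, so the overall transformation is a unitary that writes the desired bit into $O$.

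For resources, the classical algorithm for one bit of $A(\cdot)$ runs in space and time $O(N\polylog N)$ by \cref{deltabias}; since its classical time is comparable to its space up to polylog factors, standard reversible simulation gives a unitary subroutine on $W$ of the same asymptotic cost, and $W$ is reused across all levels because it is fully uncomputed after every single bit of $A$ is produced. Summing, the walk-down phase has time $\sum_{h=1}^{M} hN \cdot O(N\polylog N) = O(M^2 N^2 \polylog N)$, and phase (3) contributes the same, matching the claimed time. The persistent space is $(M+1)N + M + 1$ qubits for $R,J,O$ together with $O(N\polylog N)$ qubits for $W$; this is $O(MN)$ in the intended regime in which $\polylog N = O(M)$.

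The main obstacle I anticipate is ensuring every step is truly reversible without paying a multiplicative $\polylog N$ factor in space across the $M$ levels. The compute--copy--uncompute template together with the XOR-with-seed structure of the extractor handles this: $W$ holds at most one bit of $A$ at any moment and is reset between bits, so its cost is additive in the levels rather than multiplicative. The property that $\ext^{(h)}_s$ is a bijection for every fixed seed (\cref{fsextractorcorollary}) is exactly what makes the in-place update on $R$ a legitimate unitary; without this, the algorithm would need to store a fresh $hN$-bit copy of the updated label at every level of the recursion, blowing the space up by a factor of $M$.
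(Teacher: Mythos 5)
Your proof is correct and follows essentially the same approach as the paper's: walking the root-to-leaf path determined by the coordinate index, performing in-place XOR updates on the label register using the bijectivity/self-inverse property of the $\delta$-biased-space extractor, and using compute--copy--uncompute to keep workspace $O(N\polylog N)$ so that the total space stays $O(MN)$ and the time sums to $O(M^2N^2\polylog N)$. You are somewhat more explicit than the paper in spelling out the bit-by-bit XOR implementation (computing one coordinate of $A(\cdot)$ at a time into a single workspace qubit, Toffoli-ing it into the label controlled on the path bit, then uncomputing), but this is the same algorithm, not a different route.
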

\begin{proof} Let $(x,s_1,\ldots,s_M)$ be an input to $G_M$ where $x,s_1,\ldots,s_M\in \{0,1\}^N$. Consider the binary tree associated to the computation of $G_M$, as described before. Given any $j\in [T]$, we show how to simulate the process of traversing the $j$-th path in the tree using a unitary quantum algorithm. At each time step $i=1,\ldots,M$, we will ensure that the working memory essentially only contains the label of the $i$-th vertex in the $j$-th path. We now show how to update the memory to preserve this property.
	
Suppose we are at a node of height $i\le M$ and the current memory is $(x',s'_1,\ldots,s'_T)$ for some $x',s_1',\ldots,s'_T\in \{0,1\}^N$. Note that if $j_{M-i+1}=0$, then we don't have to update the memory. If $j_{M-i+1}=1$, we wish to update the memory to $(\ext^{(i)}_{s'_i}(x',s'_1,\ldots,s'_{i-1}), s'_i,s'_{i+1},\ldots,s_T')$. Firstly, note that this update operation is a reversible operation, in particular, it is its own inverse. This relies on our particular choice of extractors based on \textsc{xor} with $\delta$-biased spaces as in \Cref{fsextractor}. Secondly, recall that $\ext^{(i)}:\{0,1\}^{iN}\times \{0,1\}^N\to\{0,1\}^{iN}$ is computable by \emph{unitary} quantum algorithms in space and time at most $N\cdot \polylog (N)$. This implies that there is a unitary quantum algorithm which uses additional $N\cdot \polylog (N)$ space and $N\cdot \polylog (N)\cdot (iN)$ time and can update the memory from $(x',s'_1,\ldots,s'_T)$ to $(\ext_{s'_i}(x',s'_1,\ldots,s'_{i-1}), s'_i,s'_{i+1},\ldots,s'_T)$. Note that in particular, it returns any additional memory to the all zeroes state. Thus, the algorithm only requires $O(N\cdot M)+N\cdot \polylog (N)=O(N\cdot M)$ memory. The time complexity is $O(N\cdot \polylog (N)\cdot (iN))$ per iteration and $i$ varies from $1$ to $M$. This completes the proof.
\end{proof}

For a more refined bound on the time complexity of our simulation, we require the following claim.

\begin{claim}
	\label{refinedclaim} 
	Consider the binary tree associated with the INW generator $G_M$. Then, for every node $v$ and a neighbor $u$ of $v$, there is a unitary quantum algorithm which maps each possible label $\ell\in\{0,1\}^{N\cdot (M+1)}$ at $v$ to the induced label at $u$. Furthermore, if $v$ is at height $i$, then this algorithm runs in time $O(N\cdot \polylog (N)\cdot (iN))$ and space $O(N\cdot M)$.
\end{claim}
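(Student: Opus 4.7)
The plan is to verify the claim case-by-case, by exhibiting for each possible neighbor $u$ of $v$ an explicit unitary implementation of the induced label-to-label map. Fix $v$ at height $i$ with arbitrary label $\ell = (x', s_1', \ldots, s_M') \in \{0,1\}^{(M+1)N}$; its neighbors are its two children (when $i \geq 2$) and its parent (when $i \leq M-1$), giving three types of cases.

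When $u$ is the left child of $v$, the labeling rule forces the label at $u$ to equal $\ell$ itself, so the algorithm is simply the identity. When $u$ is the right child of $v$, the target label is $(\ext^{(i)}_{s_i'}(x', s_1', \ldots, s_{i-1}'), s_i', \ldots, s_M')$, which modifies only the first $iN$ bits of $\ell$. By the construction in \Cref{fsextractor}, $\ext^{(i)}_y(z) = a_y \oplus z$ for a string $a_y \in \{0,1\}^{iN}$ that depends only on the seed $y$, so the transformation is exactly XOR-by-$a_{s_i'}$ on the first $iN$ bits of $\ell$, which is an involution. I would realize this unitarily by iterating over $j = 1, \ldots, iN$: invoke the unitary algorithm of \Cref{fsextractorcorollary} on input $s_i'$ (which is already stored in $\ell$) to write the $j$-th bit of $a_{s_i'}$ into an ancilla in time and workspace $N\cdot\polylog(N)$; apply one CNOT from the ancilla into the $j$-th coordinate of $\ell$; then uncompute the ancilla by running the algorithm in reverse. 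Summed over $j$, the total time is $O(iN \cdot N\cdot\polylog(N))$ and the total space is $O(NM) + N\cdot\polylog(N) = O(NM)$, with the ancillary workspace recycled after every iteration.

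When $u$ is the parent of $v$, I split further on whether $v$ is the left or right child of $u$. If $v$ is the left child, the labels coincide and the identity suffices. If $v$ is the right child, then $v$'s label was produced from $u$'s label by exactly the right-child transformation at height $i+1$, whose seed is precisely $s_{i+1}'$, the $(i+2)$-th block of $v$'s label, copied verbatim from $u$ and therefore still accessible. Since that transformation is an involution, applying the same procedure as in the right-child case above but with $\ext^{(i+1)}$ in place of $\ext^{(i)}$ recovers $u$'s label, in time $O((i+1)N \cdot N\cdot\polylog(N)) = O(N\cdot\polylog(N) \cdot (iN))$ (using $i \geq 1$) and space $O(NM)$.

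The only delicate point is preserving unitarity while performing the in-place XOR by a seed-dependent string: this is handled by the standard compute--CNOT--uncompute pattern, which is available precisely because $\ext^{(i)}$ admits a unitary quantum algorithm (by the reversible simulation discussed at the start of \Cref{sec:inw}) and because, for each fixed seed $y$, the map $\ext^{(i)}_y$ is a bijection---in fact an involution---as recorded in \Cref{fsextractorcorollary}. Combining the three cases completes the proof.
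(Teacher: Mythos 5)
Your proof is correct and follows essentially the same route as the paper's: a case analysis over left-child, right-child, and parent neighbors, relying on the involution property of the XOR-based extractor and on the reversible/unitary computability of $\ext^{(i)}$ established for \cref{spacetimeclaim}, yielding the same time bound by observing only the first $O(iN)$ bits of the label change. In fact your parent-case formula $(\ext^{(i+1)}_{\ell_{i+1}}(\ell_0,\ldots,\ell_i),\ell_{i+1},\ldots,\ell_M)$ is the correctly indexed version of what the paper writes; the paper's stated expression $(\ext^{(i-1)}_{\ell_{i-1}}(\ell_1,\ldots,\ell_{i-2}),\ell_{i-1},\ldots,\ell_M)$ appears to contain an index typo.
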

\begin{proof}[Proof of \Cref{refinedclaim}]
The proof of this follows from the proof of \Cref{spacetimeclaim}. Let $\ell=(\ell_0,\ell_1,\ldots,\ell_M)$ where  $\ell_0,\ldots,\ell_M\in\{0,1\}^N$.  Consider the special case when $v$ is the root.  In this case, the left child's label is simply $\ell$, while the right child's label is $\ext^{(M)}(\ell)$. This proves the claim for the root node. Suppose $v$ is an intermediate node, then \Cref{spacetimeclaim} demonstrates the desired algorithm for the children of $v$. To obtain the label of the parent of $v$, if $v$ was the left child of its parent, we return $\ell$, otherwise we return $(\ext^{(i-1)}_{\ell_{i-1}}(\ell_1,\ldots,\ell_{i-2}),\ell_{i-1},\ldots,\ell_M)$. The space complexity of this algorithm is $O( N\cdot M)$ to store the label plus $O(N\cdot \polylog (N))$ workspace. Overall, the space complexity is $O(N\cdot M)$. The time complexity is $O( N\cdot \polylog (N) )$ per output bit for the first $iN$ output bits (the rest of the bits are identical to those of the input). Thus, the overall time complexity is $O(N\cdot \polylog (N)\cdot (iN))$.
\end{proof}

\section{Main Result}

\label{sec:main}
We prove that the INW Generator fools quantum space-bounded branching programs which read classical random bits. 

\begin{theorem} \label{maintheorem1} Let $B^{\alpha}:\{0,1\}^T\rightarrow\Scal(\Hcal_S)$ be any space $S$ quantum branching program reading $T$ random bits with initial state $\alpha\in\Scal(\Hcal_S)$. Fix parameters $N=\Theta\pbra{ S + \log (1/\eps)}$ and $ M=\lceil\log T\rceil$ as before. Let $G_M:\{0,1\}^{(M+1)N}\rightarrow \{0,1\}^T$ be the INW generator as defined earlier. Then,
\[ \vabs{B^{\alpha}(U_T)- B^{\alpha} (G_M(U_{(M+1)N}))  }_1 \le \eps. \]
\end{theorem}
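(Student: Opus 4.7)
The plan is to follow the hybrid argument of~\cite{inw} and prove \cref{maintheorem1} by induction on the recursive structure of $G_M$. I will show that for each $i \in \{0,1,\ldots,M\}$, every space-$S$ quantum branching program $B$ reading $2^i$ random bits, with arbitrary initial state $\alpha$, satisfies $\|B^\alpha(U_{2^i}) - B^\alpha(G_i(U_{(i+1)N}))\|_1 \le \eps_i$, where $\eps_0 = 0$ and $\eps_i \le 2\eps_{i-1} + \eps/T^2$. Unrolling gives $\eps_M \le 2^M \cdot \eps/T^2 = \eps/T \le \eps$; the $T$-bit branching program in the theorem can be padded to $2^M$ bits by identity operators that ignore the excess random bits, so it suffices to prove the claim at $i = M$. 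The base case $i=0$ is immediate because $G_0(x)=x_1$ is uniform when $x$ is uniform.

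For the inductive step, decompose $B$ into $B_L$ followed by $B_R$, each reading $2^{i-1}$ bits. I will interpolate between $H_0 := B^\alpha(U_{2^i})$ and $H_3 := B^\alpha(G_i(U_{(i+1)N}))$ via $H_1 := B^\alpha(G_{i-1}(Y_1),\, U_{2^{i-1}})$ and $H_2 := B^\alpha(G_{i-1}(Y_1),\, G_{i-1}(Y_2))$, where $Y_1, Y_2 \sim U_{iN}$ are independent, with $H_3 = B^\alpha(G_{i-1}(Y_1),\, G_{i-1}(\ext^{(i)}_{s_i}(Y_1)))$ and seed $s_i \sim U_N$. The transition $H_0 \to H_1$ uses the IH on $B_L$: writing $\sigma_L := \E[B_L^\alpha(U_{2^{i-1}})]$ and $\sigma_L' := \E[B_L^\alpha(G_{i-1}(Y_1))]$, the IH gives $\|\sigma_L - \sigma_L'\|_1 \le \eps_{i-1}$, and since $\rho \mapsto \E_{r_R}[B_R^\rho(r_R)]$ is a convex combination of CPTP maps and hence itself CPTP, \cref{tracedistanceproperty} yields $\|H_0 - H_1\|_1 \le \eps_{i-1}$. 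The transition $H_1 \to H_2$ is symmetric, applying the IH to $B_R$ with initial state $\sigma_L'$.

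The core step is the extractor transition $H_2 \to H_3$. Form the classical-quantum state $\rho_{Y_1, M} := \E_{Y_1 \sim U_{iN}}[\ket{Y_1}\bra{Y_1} \otimes B_L^\alpha(G_{i-1}(Y_1))]$, which is classical and uniform on $Y_1$. Since the memory $M$ has $S$ qubits, \cref{conditioning} gives $H_{\min}(\rho_{Y_1,M}\mid M) \ge iN - S$, and then \cref{mintwoentropy} gives the same lower bound on $H_2(\rho_{Y_1,M}\mid M)$. With the parameters fixed in \cref{sec:inw}, $\ext^{(i)}$ is an $(iN-S,\,\eps/T^2)$-weak quantum extractor by \cref{fsextractorcorollary}, so
\[\vabs{\E_{Y_1, s_i}\sbra{\ket{\ext^{(i)}_{s_i}(Y_1)}\bra{\ext^{(i)}_{s_i}(Y_1)} \otimes B_L^\alpha(G_{i-1}(Y_1))} - \tfrac{\mathbb{I}}{2^{iN}} \otimes \sigma_L'}_1 \le \eps/T^2.\]
Let $\Psi$ be the CPTP map that, on a classical-quantum state over $(Y, M)$ with $Y\in\{0,1\}^{iN}$, measures $Y$ in the computational basis and then runs $B_R$ on $M$ with randomness $G_{i-1}(Y)$. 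Applying $\Psi$ to the two states above produces $H_3$ and $H_2$ respectively (the second because $\tfrac{\mathbb{I}}{2^{iN}} \otimes \sigma_L'$ corresponds to an independent uniform $Y_2$ alongside $\sigma_L'$), so \cref{tracedistanceproperty} gives $\|H_2 - H_3\|_1 \le \eps/T^2$. Combining via \cref{tracedistancetriangle} delivers $\eps_i \le 2\eps_{i-1} + \eps/T^2$, as claimed.

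The main obstacle, relative to the classical argument of~\cite{inw}, is exactly the $H_2 \to H_3$ step: classically one conditions the random string on reaching a particular memory configuration and observes that the conditional distribution has min-entropy $\ge T - O(S)$ for ``most'' configurations, but conditioning on an arbitrary quantum memory is ill-defined. The resolution is to instead track the joint classical-quantum state $\rho_{Y_1, M}$ and invoke an extractor resilient to quantum side information~\cite{fs}, with the bound $H_{\min}(\rho_{Y_1,M}\mid M) \ge iN - S$ forced by the $S$-qubit memory supplying exactly the entropy threshold that $\ext^{(i)}$ is designed to handle.
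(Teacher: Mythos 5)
Your proof is correct and follows essentially the same route as the paper's \cref{mainlemma}: the same induction on the INW recursion depth with three hybrids, the first two transitions bounded by the inductive hypothesis plus monotonicity of trace distance under CPTP maps (\cref{tracedistanceproperty}), and the third by applying the Fehr--Schaffner weak quantum extractor (\cref{fsextractorcorollary}) to the classical-quantum state $\rho_{Y_1,M}$ together with the min-entropy estimates of \cref{conditioning} and \cref{mintwoentropy}. The only cosmetic differences are that you track the tighter recurrence $\eps_i \le 2\eps_{i-1} + \eps/T^2$ directly rather than the paper's looser $3^i\eps/T^2$ bound, and your line ``$2^M\cdot\eps/T^2 = \eps/T$'' should really be ``$2^M\cdot\eps/T^2 \le 2\eps/T$'' since $T$ need not be a power of two; neither affects the conclusion.
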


\cref{tracedistancedefinition} implies that the states $B^{\alpha}(U_T)$ and $B^{\alpha}( G_M(U_{(M+1)N}))$ cannot be distinguished by a measurement with more than $\eps/2$ advantage.  Thus, for $\eps=1/2$, we have that $G_{M}$ takes inputs of length $O(S\cdot  \log T)$ and outputs a random string of length $T$ that is indistinguishable from the uniform distribution over $T$ bits with more than $\frac{1}{4}$ probability by any quantum branching program of space $S$ and time $T$. We derive the following consequence of this. 

\begin{theorem} \label{maintheorem2} 
		Every quantum algorithm of time $T$ and space $S\ge \log T$ {\it with unitary operators and  intermediate measurements} can be simulated with error $\frac{1}{4}$ by a quantum algorithm with space ${O}(S\cdot\log T)$ and time $T\cdot S^2 \cdot \polylog (S)$ with unitary operators and {\it without intermediate measurements}. 
\end{theorem}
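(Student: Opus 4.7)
The plan is to first convert the intermediate measurements in the given algorithm $\mathcal{A}$ into reads of classical random bits, then use \Cref{maintheorem1} to replace those random bits by the output of the INW generator on a short seed, and finally realize the whole pipeline by a purely unitary circuit that prepares the seed with Hadamards and computes $G_M$ reversibly via \Cref{refinedclaim}.

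For the reduction step I would turn $\mathcal{A}$ into a unitary branching program $B$ of space $S$ reading $T$ random bits. The key identity is that for any $\rho$, the first-qubit measurement operator $M$ satisfies $M(\rho)=\tfrac{1}{2}(\rho + Z\rho Z)$, so each intermediate measurement of a qubit $q$ in $\mathcal{A}$ can be replaced by reading one random bit $r$ and applying $Z^r$ to $q$. The resulting program is unitary, reads $T$ random bits, and satisfies $B^{\rho_0}(U_T)=\mathcal{A}^{\rho_0}(x)$ with $\rho_0=\ket{0^S}\bra{0^S}$. Instantiating \Cref{maintheorem1} with error $\eps=1/2$ yields a seed length $(M+1)N=O(S\log T)$ for which $\vabs{B^{\rho_0}(U_T)-B^{\rho_0}(G_M(U_{(M+1)N}))}_1\le 1/2$; by \cref{tracedistancedefinition}, the probability of any particular measurement outcome shifts by at most $1/4$ when we feed $B$ with INW-pseudorandom bits rather than uniform bits.

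Next I would build a purely unitary algorithm $\mathcal{A}'$ with four registers: a seed register of $(M+1)N$ qubits, a ``tree-label'' register of the same size holding the current INW leaf label (initially a copy of the seed), an extractor workspace of $N\cdot\polylog(N)$ qubits, and the $S$-qubit data register of $B$. The circuit first applies Hadamards to the seed register and CNOTs it into the tree-label register, and then for each $t=1,\dots,T$ it executes: (i) apply the controlled gate $E_{t,\cdot,x}$ to the data register, using the first bit of the tree-label register as the control; (ii) walk from leaf $t$ to leaf $t+1$ of the INW binary tree by composing the neighbor-to-neighbor unitaries guaranteed by \Cref{refinedclaim}, which are reversible thanks to the bijection property of the extractor in \Cref{fsextractorcorollary}. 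Since all operations reduce to Hadamards, Toffolis, and reversible extractor subcircuits, $\mathcal{A}'$ is purely unitary with no intermediate measurements, and tracing out the seed, label, and workspace registers at the end leaves exactly $B^{\rho_0}(G_M(U_{(M+1)N}))$ on the data register, whose first qubit we then measure to produce the output.

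The main quantitative point, and the step I expect to need the most care, is the amortized running time bound $T\cdot S^2\cdot\polylog(S)$. A naive per-step recomputation of the root-to-leaf path would cost $M^2\cdot N^2\cdot\polylog(N)=(\log T)^2\cdot S^2\cdot\polylog(S)$ per output bit, a factor of $(\log T)^2$ too much. Instead, the sequence of leaf visits amounts to an Euler tour of the INW tree, so each edge at height $i$ is traversed $O(1)$ times and there are $O(T/2^i)$ such edges; by \Cref{refinedclaim} the per-edge cost at height $i$ is $O(iN^2\cdot\polylog(N))$, and $\sum_{i=1}^{M}(T/2^i)\cdot iN^2\cdot\polylog(N)=O(T\cdot N^2\cdot\polylog(N))=O(T\cdot S^2\cdot\polylog(S))$. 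The space bound $O(S\log T)$ follows immediately from the register sizes. Combining the closeness from \Cref{maintheorem1}, the equivalence of measurements with one random bit, and the reversible computation of $G_M$ yields a $1/4$-error simulation of $\mathcal{A}$ within the stated resources.
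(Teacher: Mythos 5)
Your proposal is correct and follows essentially the same route as the paper: replace each intermediate measurement by a random-bit-controlled phase flip (your $Z^r$ is the paper's $U_1$ up to an irrelevant global phase, cf.\ Lemma~\ref{randomcoinsmeasurement}), invoke Theorem~\ref{maintheorem1} to swap the uniform tape for $G_M$ on a short seed, prepare the seed with Hadamards and keep the current INW leaf label in a register that is updated reversibly via the neighbor maps of Claim~\ref{refinedclaim}, and amortize the per-leaf cost over an Euler tour of the tree, which is exactly the paper's DFS argument. The only cosmetic differences are your choice of $\eps=1/2$ (the paper uses $1/4$, both giving error at most $1/4$) and your explicit separate seed and tree-label registers (the paper merges them), neither of which changes the $O(S\log T)$ space or $T\cdot S^2\cdot\polylog(S)$ time bound.
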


\section{Proof Of \cref{maintheorem1} }

\label{sec:maintheorem1}
The proof of \cref{maintheorem1} is immediate from the following lemma. 

\begin{lemma} \label{mainlemma} Let $i\in \mathbb{Z}_{\ge 0}$ and $x\in \{0,1\}^n$. Let $B^{\alpha}:=B_x^{\alpha}(\cdot):\{0,1\}^{2^i}\rightarrow \Scal(\Hcal_S)$ be any space $S$ quantum branching program with initial state $\alpha\in\Scal(\Hcal_S)$ and input $x\in \{0,1\}^n$ which reads $2^i$ bits of randomness.  Let $V,S_1,\ldots,S_i\sim U_{N}$ be independent samples. Then, for all $\alpha\in\Scal(\Hcal_S)$, 
 \[  \vabs{ B^{\alpha}(U_{2^i})- B^{\alpha}( G_i(V,S_1,\ldots,S_i )) }_1 \le \frac{3^i\cdot  \eps}{   T^2}.  \]
\end{lemma}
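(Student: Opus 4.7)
I would prove \cref{mainlemma} by induction on $i$. For the base case $i=0$: by definition $G_0(V) = V_1$, so for $V \sim U_N$ the output $G_0(V)$ is distributed exactly as $U_1 = U_{2^0}$, giving trace distance $0$.

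For the inductive step $i \ge 1$, split $B^\alpha$ at its midpoint: let $B_0^\alpha$ denote the first $2^{i-1}$ operators acting on initial state $\alpha$, and let $B_1^\gamma$ denote the last $2^{i-1}$ operators acting on initial state $\gamma$; each is a space $S$ quantum branching program reading $2^{i-1}$ random bits. Let $W := (V, S_1, \ldots, S_{i-1}) \sim U_{iN}$, and let $W^* \sim U_{iN}$, $U, U' \sim U_{2^{i-1}}$ be independent uniform samples. Introduce four hybrid states:
\begin{align*}
H_0 &:= \E_{U, U'}\sbra{B_1^{B_0^\alpha(U)}(U')}, & H_1 &:= \E_{W, U'}\sbra{B_1^{B_0^\alpha(G_{i-1}(W))}(U')}, \\
H_2 &:= \E_{W, W^*}\sbra{B_1^{B_0^\alpha(G_{i-1}(W))}(G_{i-1}(W^*))}, & H_3 &:= \E_{W, S_i}\sbra{B_1^{B_0^\alpha(G_{i-1}(W))}(G_{i-1}(\ext^{(i)}_{S_i}(W)))}.
\end{align*}
By the recursive definition of $G_i$, $H_0 = B^\alpha(U_{2^i})$ and $H_3 = B^\alpha(G_i(V, S_1, \ldots, S_i))$; the plan is to bound $\vabs{H_0 - H_3}_1$ by triangle inequality through $H_1$ and $H_2$.

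For $\vabs{H_0 - H_1}_1$, apply the inductive hypothesis to $B_0^\alpha$ (a space $S$ branching program reading $2^{i-1}$ bits with initial state $\alpha$), obtaining $\vabs{B_0^\alpha(U_{2^{i-1}}) - B_0^\alpha(G_{i-1}(W))}_1 \le 3^{i-1}\eps/T^2$. For each fixed $U'$, the map $\gamma \mapsto B_1^\gamma(U')$ is CPTP, so by \cref{tracedistanceproperty} and convexity of $\vabs{\cdot}_1$ this bound survives averaging over $U'$, yielding $\vabs{H_0 - H_1}_1 \le 3^{i-1}\eps/T^2$. Symmetrically, setting $\beta(W) := B_0^\alpha(G_{i-1}(W))$, for each fixed $W$ the inductive hypothesis applied to $B_1$ with initial state $\beta(W)$ gives $\vabs{B_1^{\beta(W)}(U_{2^{i-1}}) - B_1^{\beta(W)}(G_{i-1}(W^*))}_1 \le 3^{i-1}\eps/T^2$, and averaging over $W$ gives $\vabs{H_1 - H_2}_1 \le 3^{i-1}\eps/T^2$.

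The main step is $\vabs{H_2 - H_3}_1 \le \eps/T^2$ via the weak quantum extractor. Form the classical-quantum state $\rho_{WZ} := \E_{W \sim U_{iN}}\sbra{\ket{W}\bra{W} \otimes \beta(W)}$: since $W$ is classical-uniform and each $\beta(W)$ has largest eigenvalue at most $1$, block-diagonality gives $H_{\min}(\rho_{WZ}) \ge iN$; then \cref{conditioning} yields $H_{\min}(\rho_{WZ}|Z) \ge iN - S$ and \cref{mintwoentropy} yields $H_2(\rho_{WZ}|Z) \ge iN - S = t$. Applying the $(t, \eps/T^2)$-weak quantum extractor $\ext^{(i)}$ from \cref{fsextractorcorollary},
\[ \vabs{\E_{W, S_i}\sbra{\ket{\ext^{(i)}_{S_i}(W)}\bra{\ext^{(i)}_{S_i}(W)} \otimes \beta(W)} - \tfrac{\mathbb{I}_{iN}}{2^{iN}} \otimes \E_W[\beta(W)]}_1 \le \eps/T^2. \]
Now apply the CPTP map $F$ that, on input $\ket{w'}\bra{w'} \otimes \tau$, outputs $B_1^\tau(G_{i-1}(w'))$. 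The first state maps to $H_3$ by construction, and by linearity of $B_1^\tau$ in $\tau$ the second maps to $\E_{W, W^*}\sbra{B_1^{\beta(W)}(G_{i-1}(W^*))} = H_2$. Monotonicity of trace distance gives $\vabs{H_2 - H_3}_1 \le \eps/T^2$, and triangle inequality yields $\vabs{H_0 - H_3}_1 \le (2 \cdot 3^{i-1} + 1)\eps/T^2 \le 3^i \eps/T^2$, closing the induction. The main obstacle is justifying the entropy bound: the crucial point is that the seed register $W$ is classical, uniform, and never touched by $B_0$, so conditioning on the $S$-qubit quantum memory drops the min-entropy by at most $S$, making the parameter $t = iN - S$ achievable and enabling the application of the weak quantum extractor.
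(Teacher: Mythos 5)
Your proof is correct and takes essentially the same approach as the paper: the same induction, the same split of $B$ into $B_0$ and $B_1$, the same chain of hybrids (your $H_0,\dots,H_3$ match the paper's three triangle-inequality terms), the same use of the inductive hypothesis plus CPTP monotonicity for the first two gaps, and the same min-entropy/collision-entropy bound followed by the weak quantum extractor and a controlled-$B_1$ CPTP map for the third gap. The only cosmetic differences are that you invoke the inductive hypothesis per-$W$ and average via convexity where the paper applies it once to the averaged state $\beta$, and your CPTP map $F$ traces out the classical register whereas the paper's $\tilde B_1$ retains it; both are equivalent.
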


Setting $i=M\triangleq \lceil \log T\rceil$ in the above lemma implies that $\vabs{ B^{\alpha}(U_T)- B^{\alpha}( G_M(U_{(M+1)N})) }_1 \le  \eps$ and completes the proof of \cref{maintheorem1}. It suffices to prove the above lemma.

\begin{proof}[ Proof of \cref{mainlemma}]
The proof is by induction on $i$. The base case is true since $G_0(U_{N})=U_1$. Let us assume that the statement holds for all $j<i$  for all bounded space branching programs. Let $U_{2^{i}}=(U,U')$ where $U,U'\sim U_{2^{i-1}}$ are two independent uniformly distributed random variables on $2^{i-1}$ bits. We apply Triangle Inequality on the distance corresponding to the $i$-th generator as follows. 

Let $W=(V,S_1,\ldots,S_{i-1})\sim U_{iN}$ and $W'\sim U_{iN}$, independently of $W$. 
\begin{align}\begin{split}
&\vabs{ B^{\alpha}(U_{2^{i}}) - B^{\alpha}(G_i(V,S_1,\ldots,S_i)) }_1 \\
&\triangleq \vabs{ B^{\alpha}(U,U') - B^{\alpha}( G_{i-1}(W)\circ G_{i-1}(\ext_{S_i}(W )) }_1\\
&\le  \vabs{ B^{\alpha}(U,U')  - B^{\alpha}( G_{i-1}(W)\circ U') }_1 \\
& + \vabs{B^{\alpha}( G_{i-1}(W)\circ U')  - B^{\alpha}( G_{i-1}(W)\circ G_{i-1}(W')) }_1  \\
& +  \vabs{B^{\alpha}( G_{i-1}(W)\circ G_{i-1}(W'))  - B^{\alpha}( G_{i-1}(W)\circ G_{i-1}(\ext_{S_i}(W)) }_1 
\label{triangle} \end{split}\end{align}

We show that the each of these terms are bounded by $ \tfrac{3^{i-1}\cdot \eps }{T^2}$. Let the branching program $B^{\alpha}:\{0,1\}^{2^{i}}\rightarrow \Scal(\Hcal_S)$ be decomposed as $(B_0\cdot B_1)^{\alpha}$ where $B_0^{\alpha},B_1^{\beta}:\{0,1\}^{2^{i-1}}\rightarrow \Scal(\Hcal_S)$. Here, $B_0^{\alpha}(r_0)$ is a branching program that takes a (random) input $r_0\in \{0,1\}^{2^{i-1}}$ and runs the first half of $B^{\alpha}$ on the state $\alpha$, and $B_1^{B_0^{\alpha}(r_0)}(r_1)$ is a branching program that takes a (random) input $r_1\in \{0,1\}^{2^{i-1}}$ and runs the second half of $B^{\alpha}$ on the final state of the first half. 

We will first bound the first term by induction. By induction, the final state of the first branching program $B_0$ is nearly identical whether the inputs are drawn according to $ U_{2^{i-1}}$ or according to $G_{i-1}(W)$. More precisely, by induction, we have that
\[  \vabs{B_0^{\alpha}(U)- B_0^{\alpha}( G_{i-1}(W)) }_1  \le  \frac{3^{i-1}\cdot\eps}{ T^2 }.  \]
For any distribution $D\sim\{0,1\}^{2^{i-1}}$, the state $B_1^{B_0^{\alpha}(D)}(U')\triangleq B^{\alpha}(D,U')$ is obtained by taking the state $B_0^{\alpha}(D)$ and acting on it by the quantum operation $B_1(U')=\mathbb{E}_{r_1\sim U'}[B_1(r_1)]$. \cref{tracedistanceproperty} implies that quantum processing can only decrease the trace distance between two states. This, along with the above inequality gives us the following bound on the first term.
 \begin{equation}\label{firstbound}  
 	\vabs{ B^{\alpha}(U,U') -B^{\alpha}( G_{i-1}(W), U') }_1 \le  \frac{3^{i-1}\cdot \eps}{   T^2} . \end{equation}

The second term can be similarly bounded. Let $\beta=B_0^{\alpha}(G_{i-1}(W))$ be the final state of the first branching program $B_0^{\alpha}$ on the initial state $\alpha$ and input drawn according to the distribution $G_{i-1}(W)$. By the induction hypothesis applied on the second branching program $B_1^{\beta}$, the final state is nearly identical whether the inputs are drawn according to $U'$ or according to $G_{i-1}(W')$. That is,
\[ \vabs{ B_1^{\beta}(U')- B_1^{\beta}( G_{i-1}(W')) }_1 \le \frac{3^{i-1}\cdot \eps}{   T^2}.  \]
Since $B_1^{\beta}(D)=B^{\alpha}(G_{i-1}(W),D)$ for all distributions $D\sim\{0,1\}^{2^{i-1}}$, this gives us the following bound on the second term.
\begin{align}
\vabs{ B^{\alpha}(G_{i-1}(W),U')- B^{\alpha}(G_{i-1}(W), G_{i-1}(W')) }_1  \le\frac{3^{i-1}\cdot \eps}{   T^2}  \label{secondbound}
\end{align}

To bound the third term, we apply \cref{fsextractorcorollary} with the following parameters. Let $W$ be the uniform distribution over $\{0,1\}^{iN}$ as before. For all $w\in \{0,1\}^{iN}$, let $\rho_w:=B_0^{\alpha}(G_{i-1}(w))\in\Scal(\Hcal_S)$ be the final state reached by the first half of the program on the random string $G_{i-1}(w)$ and let $\rho_{XS}=\underset{w\sim W}{\E}\sbra{ \ket{w}\bra{w} \rho_w}$ where the system $X$ consists of $iN$ registers and the system $S$ consists of $S$ registers. Firstly, $\rho_{XS}$ is a classical-quantum state that is classical and uniform on $X$. Note that $t:=H_2(\rho_{XS}|S)\ge H_{\min}(\rho_{XS}|S) \ge H_{\min}(\rho_{XS})- S\ge iN-S$ due to \cref{conditioning} and \cref{mintwoentropy}. We chose $N=\Theta(\log T + S +\log(1/\eps) )\ge S + 2\log (iN) + 2\log(T^2/\eps)+O(1)\ge iN-t + 2\log (iN) + 2\log(T^2/\eps)+O(1)$. \cref{fsextractorcorollary} implies we have $\|\sigma-\sigma'\|_1\le \frac{\eps}{T^2}$, where
\[ \sigma:=   \underset{\substack{w\sim U_{iN}\\S_i\sim U_N} }{\E}\sbra{ \ket{\ext^{(i)}_{S_i}(w)}\bra{\ext^{(i)}_{S_i}(w)} B_0^\alpha (G_{i-1}(w)) }\quad\text{and}\quad \sigma':=  
\frac{\mathbb{I}}{2^{iN}}\otimes  \underset{w\sim U_{iN}}{\E} \sbra{B_0^\alpha (G_{i-1}(w)) } \]
Consider the controlled operator $\tilde{B}_1$ on $iN+S$ qubits defined by $\tilde{B}_1(\ket{w}\bra{w} \rho) = \ket{w}\bra{w}B_1^{\rho}(G_{i-1}(w))$ for  $w\in\{0,1\}^{iN}$ and $\rho\in\Scal(\Hcal_S)$. This operator has the effect of applying the quantum operation $B_1(G_{i-1}(w))$ on the last $S$ qubits, provided the first $iN$ qubits are in the state $\ket{w}\bra{w}$. Note that 
\[ \tilde{B}_1(\sigma)=  \underset{\substack{w\sim U_{iN}\\ S_i \sim U_N}}{\E}  \sbra{\ket{\ext^{(i)}_{S_i}(w)}\bra{\ext^{(i)}_{S_i}(w)} B^\alpha(G_{i-1}(w),G_{i-1}(\ext^{(i)}_{S_i}(w)))}\] 
\[ \tilde{B}_1(\sigma')=\underset{\substack{w\sim U_{iN}\\w'\sim U_{iN}}}{\E}\sbra{ \ket{w'}\bra{w'} B^\alpha(G_{i-1}(w),G_{i-1}(w'))}\]
Since $\| \sigma- \sigma'\|_1 \le  \frac{\eps}{T^2} $, \cref{tracedistanceproperty} implies that $\vabs{ \tilde{B}_1(\sigma)- \tilde{B}_1(\sigma')}_1\le \frac{\eps}{T^2} $. Note that for the state $\tilde{B}_1(\sigma)$, ignoring the first $iN$ qubits, the state on the last $S$ qubits is given by $B^{\alpha}(G_{i-1}(W),G_{i-1}(\ext_{S_i}(W)))$ while for $\tilde{B}_1(\sigma')$, it is given by $B^{\alpha}(G_{i-1}(W),G_{i-1}(W'))$. This implies that \begin{equation} \label{thirdbound} \vabs{ B^{\alpha}(G_{i-1}(W),G_{i-1}(\ext_{S_i}(W)))-B^{\alpha}(G_{i-1}(W),G_{i-1}(W'))}_1\le \frac{\eps}{T^2} .\end{equation}
Substituting \cref{firstbound}, \cref{secondbound} and \cref{thirdbound} in \cref{triangle} completes the inductive step. 
\end{proof}

\section{Proof of \cref{maintheorem2} }

\label{sec:maintheorem2}

We now demonstrate a proof of \Cref{maintheorem2} with the same space bound of $O(S\cdot \log T)$ but a weaker running time bound of $T\cdot (S\cdot \log T)^2\cdot \polylog (S)$. We will then show how to modify the algorithm so that the running time is $T\cdot S^2\cdot \polylog (S)$. 

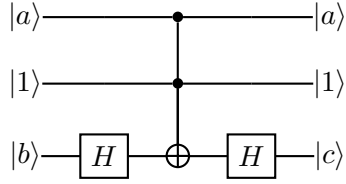
\begin{figure}
	\begin{quantikz}
		\ket{a} & \qw & \ctrl{2} &\qw & \qw \ket{a}\\
		\ket{1} & \qw & \ctrl{1} & \qw & \qw  \ket{1} \\
		\ket{b} & \gate{H} & \targ{} &\gate{H} &\qw \ket{c}\\
	\end{quantikz}
	\label{figure1}
	\centering
	\caption{Implementation of controlled-$U_1$ using Hadamard and Toffoli gates. Here,  $\ket{c}=\ket{U_1(b)} \text{ if }a=1 \text{ and } \ket{b}\text{ otherwise}$ for bits $a,b\in\{0,1\}$.}
\end{figure}

The proof of this theorem follows by an equivalence between quantum algorithms which perform intermediate measurements and quantum branching programs with classical randomness. Consider a qubit initialized to $\ket{0}$ and repeatedly apply the Hadamard operator and the measurement operator. The sequence of outcomes of the measurement operator is a uniformly random string. Thus, intermediate measurements allow quantum algorithms to simulate random coins. Conversely, intermediate measurements can be simulated using random coins as follows.

\begin{lemma}\label{randomcoinsmeasurement}
 Let $M$ refer to the measurement operator on the first qubit in the $\{\ket{0},\ket{1}\}$ basis. Let $U_0=\mathbb{I}$ be the identity operator and let $U_1$ be the reflection operator of the first qubit about $\ket{1}$. Then, for all $\rho\in \mathcal{S}(\mathcal{H})$, we have
		$M(\rho) = \frac{1}{2}\left( U_0\rho U_0 + U_1\rho U_1^\dagger\right)$.
\end{lemma}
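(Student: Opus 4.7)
The plan is a direct computation: expand $\rho$ in the computational basis on the first qubit, apply each of $U_0$ and $U_1$ by conjugation, and check that averaging the two resulting operators exactly cancels the off-diagonal (coherence) terms while preserving the diagonal (populations), which is precisely the effect of $M$.

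Concretely, write $\rho = \sum_{i,j\in\{0,1\}} \ket{i}\bra{j}\otimes \rho_{i,j}$, where each $\rho_{i,j}$ acts on the remaining qubits. By definition $U_0=\mathbb{I}$, so $U_0\rho U_0=\rho = \sum_{i,j}\ket{i}\bra{j}\otimes \rho_{i,j}$. For $U_1$, the reflection of the first qubit about $\ket{1}$ is self-adjoint and acts by $U_1\ket{0}=-\ket{0}$, $U_1\ket{1}=\ket{1}$, which on the remaining qubits is the identity; therefore $U_1\rho U_1^\dagger = \sum_{i,j} (-1)^{i+j}\ket{i}\bra{j}\otimes \rho_{i,j}$. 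Averaging,
\[
\tfrac{1}{2}\pbra{U_0\rho U_0 + U_1\rho U_1^\dagger} = \sum_{i,j\in\{0,1\}} \tfrac{1+(-1)^{i+j}}{2}\,\ket{i}\bra{j}\otimes \rho_{i,j},
\]
and the coefficient equals $1$ when $i=j$ and $0$ otherwise, which yields $\sum_{i\in\{0,1\}}\ket{i}\bra{i}\otimes \rho_{i,i} = M(\rho)$, matching the definition of the measurement operator given in the preliminaries.

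There is no real obstacle here; the statement is essentially the well-known fact that a computational-basis measurement on a single qubit implements a phase-flip channel averaging over $\{\mathbb{I},Z\}$ on that qubit. The only minor bookkeeping point is to remember that the $\rho_{i,j}$ are themselves operators on the ancillary register and that $U_0,U_1$ act as the identity on that register, so they commute with the tensor factors $\rho_{i,j}$ and the cross terms really do come out with the clean $(-1)^{i+j}$ phase.
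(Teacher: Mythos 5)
Your proof is correct and takes essentially the same approach as the paper: a direct computation in the computational basis on the first qubit showing that averaging over $\{U_0, U_1\}$ kills the off-diagonal terms $\ket{i}\bra{j}$ with $i_1 \neq j_1$ and preserves the diagonal ones, then appealing to linearity. The paper phrases this by checking the identity on basis elements $\ket{i}\bra{j}$ with $i,j\in\{0,1\}^*$ while you decompose $\rho$ directly, but the computation is the same.
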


\begin{proof}[Proof of \cref{randomcoinsmeasurement}]
	Let $\rho=\ket{i}\bra{j}\in \mathcal{S}(\mathcal{H})$ be a basis element of $\Scal(\Hcal)$ for $i,j\in \{0,1\}^*$. Note that 
	\begin{align*}
		\begin{split}
			\frac{1}{2}\left( U_0\rho U_0 + U_1\rho U_1^\dagger\right) &= \frac{1}{2}\left( \ket{i}\bra{j}+ (\mathbbm{1}_{i_1=j_1} - \mathbbm{1}_{i_1\neq j_1})\cdot \ket{i}\bra{j} \right)\\
			&= \mathbbm{1}_{i_1=j_1}\cdot  \ket{i}\bra{j} =M(\rho)
		\end{split}
	\end{align*}
Since the above equality holds for all basis elements $\rho\in \mathcal{S}(\mathcal{H})$ and both sides are linear in $\rho$, this equation holds for all $\rho\in \mathcal{S}(\mathcal{H})$. 
\end{proof}

\begin{proof}[Proof of \cref{maintheorem2}]

 \begin{figure}
 	\centering
\begin{quantikz}
	\lstick{$S$ space}  \ket{0} &  \qwbundle{}
	\qw   &\qw &\qw & \gate{U_1} &\qw &\qw &\gate{U_2}&\qw  \ldots &\meter{} &\qw \\
	\lstick{$1$ qubit}  \ket{0} & \qw & \qw &  \gate[wires=3][1cm]{\mathcal G_M^{(1)}} & \ctrl{-1} & \gate[wires=3][1cm]{{\mathcal G_M^{(1)}}^{-1}} & \gate[wires=3][1cm]{{\mathcal G_M^{(2)}}}  & \ctrl{-1}& \qw \ldots \\
	\lstick{${O}((S+\log T)\cdot \log T)$ \\ work space} \ket{0} &  \qwbundle{} \qw  &  \qw  & \qw & \qw & \qw &\qw  & \qw &\qw  \ldots  \\
	\lstick{$O((S+\log T)\cdot\log T)$ \\seed} \ket{0} &  \qwbundle{} \qw & \gate{H^{\otimes r}} &  &\qw   &\qw & & \qw & \qw \ldots & \\
\end{quantikz}
\label{figure2}
\caption{Simulation in the proof of \cref{maintheorem2}. Suppose the branching program $\Bcal$ consists of a sequence of controlled unitaries $U_1,\ldots,U_T$, controlled on a sequence of uniformly random bits. The figure describes the simulation of $\Bcal$ by a unitary algorithm.}
 \end{figure}
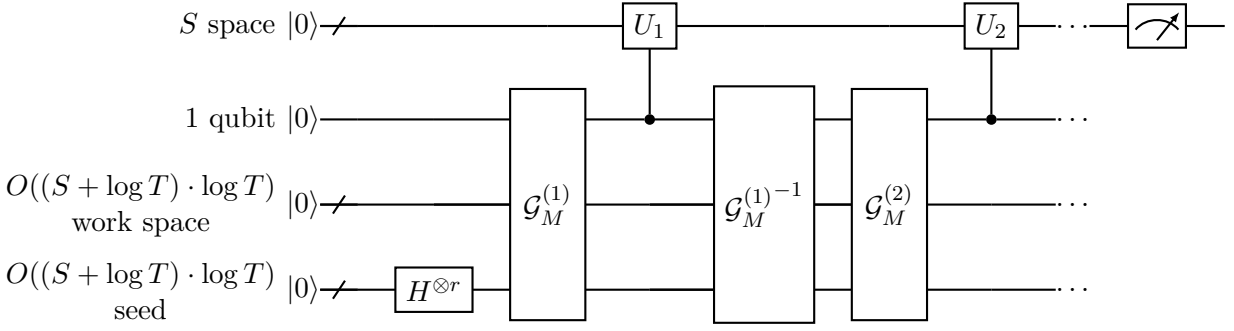

We begin by describing the simulation which runs in time $T\cdot (S\cdot \log T)^2\cdot \polylog ( S)$. Let $\mathcal{Q}$ be a quantum algorithm of time $T$ and space $S\ge \log T$ whose operators are in $\Gcal_U\cup\{R,M\}$. From \cref{randomcoinsmeasurement}, it follows that every measurement operator can be simulated by tossing a random coin, followed by applying controlled-$U_1$. The controlled-$U_1$ operator can be constructed using the Hadamard and the Toffoli gates as shown in Figure 1. Thus, the correspondence from \cref{randomcoinsmeasurement} defines a quantum branching program $\mathcal{B}$ such that: The initial state is the all zeroes state, the program uses (at most) $T$ bits of randomness and space $S+1$, the operators are in $\Gcal_U\cup\{R\}$, and most importantly, for all inputs $x\in \{0,1\}^n$, the expected output of $\Bcal_x$ when run on a uniformly random string in $\{0,1\}^T$ is identical to the final state of the algorithm $\mathcal{Q}$ on input $x$. Note that the operators of the branching program $\mathcal{B}$ are all unitary, furthermore, the sequence of operators is indeed logspace-uniform.

Let $M=\Theta(\log T)$ and $N=\Theta(S)$ be the parameters in the definition of the generator $G_M$ with $\eps=1/4$. \cref{spacetimeclaim} implies that $G_M$ is computable by unitary quantum algorithms in space ${O}(S\cdot \log T)$ and time $S^2\cdot (\log T)^2\cdot \polylog (S)$. That is, there is a unitary quantum algorithm (call it $\mathcal{G}_M^{(i)}$) which runs in space $O(S\cdot \log T)$ and time $S^2\cdot (\log T)^2\cdot \polylog (S)$ and which on input $r\in \{0,1\}^{N\cdot(M+1)}$ and a coordinate $i\in[T]$, outputs the $i$-th coordinate of $G_M(r)$. We use ${\mathcal G_M^{(i)}}^{-1}$ to refer to the inverse of the algorithm. 

Consider a quantum algorithm $\mathcal{Q}'$ of space $ {O}(S\cdot \log T)$ where the first $S$ qubits correspond to the memory of the branching program $\mathcal{Q}$, the next qubit is the random bit, the next ${O}(S\cdot \log T)$ qubits correspond to the workspace of the pseudorandom generator plus the space to store and iterate over coordinates in $[T]$ and the last $O(S\cdot \log T)$ qubits are the seed to the generator. The algorithm applies the Hadamard operator to the seed, which creates the uniform superposition of all possible seeds. It then computes the first output bit of $G_M$ on this superposition of seeds, simulates one step of the branching program using this bit, and uncomputes this bit. It similarly simulates all $T$ steps of the branching program. Finally, the algorithm measures the first qubit and outputs the outcome. Note that this algorithm exactly simulates the branching program $\mathcal{B}$ on input $x$ and a random string drawn from the output $G_M(U_{N(M+1)})$ of the generator.  
We now apply \cref{maintheorem1} to conclude that the output of the branching program $\mathcal{B}$ on inputs drawn from $U_T$ is indistinguishable with more than $\frac{1}{4}$ advantage from the case when the inputs are drawn from $G_{M}(U_{N(M+1)})$. Thus, the algorithm $\mathcal{Q}'$ described above simulates the original algorithm with error $\frac{1}{4}$. The overall space complexity is $O(S\cdot \log T)$ and the time complexity is $T\cdot S^2\cdot (\log T)^2\cdot \polylog (S)$. 

We now describe the modification required to achieve a time dependence of $O(T\cdot S^2\cdot \polylog  (S))$. We will use \cref{refinedclaim} as opposed to \cref{spacetimeclaim}. The main observation is that we don't need to uncompute at every time step, instead, we only need to uncompute to the point that we can compute the next coordinate. More precisely, consider the binary tree of depth $\lceil \log T\rceil$ associated with the computation of $G_M$ and consider performing a left-to-right DFS (depth-first-search) on this tree. If we consider the sequence of leaves reached by this DFS, the first coordinates of the labels give rise to the output $G_M(x,s_1,\ldots,s_M)$ (furthermore, the bits are in the correct order). Thus, performing a left-to-right DFS on this tree and computing the labels of the vertices along the DFS would allow us to sequentially compute the coordinates of the output $G_M(x,s_1,\ldots,s_M)$. To perform a DFS and compute the labels, we use the unitary maps given by \cref{refinedclaim}. It suffices to analyze the space and time complexity of this procedure. 

\cref{refinedclaim} implies that the space complexity of any individual walk step is at most $O(N\cdot M)$ and since we walk on a binary tree of depth $M$, the overall space complexity is at most $O(N\cdot M  +M)=O(N\cdot M)$. The time complexity at a vertex of height $i\in[M]$ is at most $N\cdot \polylog (N)\cdot (iN)$. Since there are at most $4T/2^i$ vertices at height $i$, it follows that the overall time complexity is at most 
\[  N\cdot \polylog (N)\cdot \sum_{i=1}^M 4T\cdot (iN)\cdot \frac{1}{2^i} \le  T\cdot N^2\cdot \polylog (N)\cdot .\]
This implies that the overall time complexity of the simulation is $T\cdot S^2\cdot\polylog (S)$. This completes the proof.
\end{proof}

\section{Acknowledgements}

We thank anonymous reviewers for  helpful comments and for pointing out the simulation of intermediate measurements using reset operations.

\appendix
\section{Appendix}
\subsection{Finite Field Arithmetic}

We use $\Fbb_2[x]$ to denote the set of univariate polynomials over $\Fbb_2$. This is a ring with the usual notions of addition and multiplication of polynomials. Furthermore, the addition of two polynomials of degree at most $m$ can be done in space $O(m)$ and time $O(m)$, while multiplication can be done in space and time at most $O(m\log m \log\log m)$. (See Theorem 8.23 from~\cite{vonzurgathen}.) Overall, arithmetic in this ring can be done in space and time at most $m\cdot\polylog (m)$. Let $f(x)\in \Fbb_2[x]$ be a polynomial of degree $m$. We use $\langle f(X)\rangle \triangleq \{g(x)\cdot f(x)| g(x)\in \Fbb_2[x]\}$ to denote the ideal generated by $f(x)$. The quotient space $\Fbb_2[x]/\left<f(x)\right>$ consists of cosets of the form $[g(x)]=g(x)+\langle f(x)\rangle$ where $g(x)$ is an arbitrary polynomial in $\Fbb_2[x]$ of degree at most $m-1$. This quotient space is a ring and inherits the addition and multiplication operations from $\Fbb_2[x]$, that is $[g(x)+h(x)]=[g(x)]+[h(x)]$ and $[g(x)\cdot h(x)]=[g(x)]\cdot [h(x)]$. The time and space complexity of division of polynomials of degree at most $m$ is bounded by that of multiplication. In particular, arithmetic in the ring $\Fbb_2[x]/\langle f(x)\rangle$ can be performed in space and time at most $m\cdot \polylog (m).$ (See Theorem 9.6 from~\cite{vonzurgathen}.) The polynomial $f(x)$ is irreducible if and only if the quotient space $\Fbb_2[x]/\left< f(x)\right>$ is a finite field. (See Theorem 1.61 in \cite{galoistextbook}.) Furthermore, if $f$ has degree $m$, then the corresponding field is of size $2^{m}$. It is well known that the polynomials $f_i(x):=x^{2\cdot 3^i}+x^{3^i}+1$ for $i\in \Nbb$ are irreducible over $\Fbb_2[x]$. (See Exercise 3.96 in \cite{galoistextbook}.)  Thus, given any $m\in \Nbb$ such that $m=2\cdot 3^i$, we can efficiently construct a finite field of characteristic two of size $2^m$ by considering the quotient space $\Fbb_2[x]/\langle f_i(x) \rangle $. Futhermore, addition and multiplication of two elements in this field can be done in space and time $m\cdot \polylog (m)$. Similarly, raising elements of the field to $k$-th powers can be done in space $m\cdot \polylog (m)+\log k$. and time $m\cdot \polylog (m)\cdot \log k$ by repeated squaring.

\end{document}